\pgfplotsset{compat=newest}
\tikzset{
	declare function={
		normcdf(\x,\m,\s)=1/(1 + exp(-0.07056*((\x-\m)/\s)^3 - 1.5976*(\x-\m)/\s));
	}
}
\newtheorem{theorem}{Theorem}
\newtheorem{corollary}{Corollary}
\newtheorem{lemma}{Lemma}
\newtheorem{prop}{Proposition}
\theoremstyle{definition}
\begin{document}
\setlength{\footnotesep}{0.75\baselineskip}
\setlength{\footskip}{1.5\normalbaselineskip}

\begin{titlepage}
	\title{Discovery through Trial Balloons}
	\author{\larger \textsc{Eitan Sapiro-Gheiler \\MIT\\ \\ \today}}
	\thanks{Email: eitans@mit.edu
		\\ \indent We thank Robert Gibbons, Stephen Morris, Alexander Wolitzky, and participants in the MIT Theory Lunch for helpful discussions and comments. Thanks especially to Roi Orzach for helping with an earlier version of this project. We acknowledge support from the National Science Foundation Graduate Research Fellowship under Grant No. 1745302.}
	
	\begin{abstract}
		
		A principal and an agent face symmetric uncertainty about the value of two correlated projects for the agent. The principal chooses which project values to publicly discover and makes a proposal to the agent, who accepts if and only if the expected sum of values is positive. We characterize optimal discovery for various principal preferences: maximizing the probability of the grand bundle, of having at least one project approved, and of a weighted combination of projects. Our results highlight the usefulness of trial balloons: projects which are ex-ante disfavored but have higher variance than a more favored alternative. Discovering disfavored projects may be optimal even when their variance is lower than that of the alternative, so long as their disfavorability is neither too large nor too small. These conclusions rationalize the inclusion of controversial policies in omnibus bills and the presence of moonshot projects in organizations.\newline
		
		\noindent\emph{JEL\ Classification:} D21, D83\newline
		
		\noindent\emph{Keywords: trial balloons, public discovery, constrained information design}
	\end{abstract}
	
	\clearpage
	
	\begingroup
	\let\MakeUppercase\relax
	\maketitle
	\thispagestyle{empty}
	\endgroup
\end{titlepage}

\newpage 
\section{Introduction}
\label{sec:intro}
Many large proposals are composed of multiple smaller projects. For example, the 2022 omnibus appropriations bill passed in the US Congress includes funding for various federal agencies, transportation-related earmarks, and emergency aid for Ukraine;\footnote{As described by the National Conference of State Legislatures \href{https://www.ncsl.org/ncsl-in-dc/publications-and-resources/2022-omnibus-appropriations-bill-a-summary-of-provisions-by-federal-agency.aspx.}{here}.} a grant proposal may contain funding for multiple research designs carried out by a particular lab; or a worker may request their manager's approval for a slate of several projects rather than presenting them one-by-one. In these examples, and many others, even if the value of each \textit{project}, or component issue in the overall \textit{proposal}, is uncertain, there may be a known correlation between projects. After all, an appropriations bill reflects the priorities of the governing party; a scientific lab usually has a well-defined area of expertise; and a worker has a particular role or set of responsibilities. Given the complexity of the full proposal, it both unsurprising and empirically common for proposers in these settings to concentrate attention on one ``headline" project before finalizing the proposal. For instance, the Biden administration campaigned in large part on infrastructure investment, then included infrastructure-related items like clean energy spending alongside changes to Medicare and the tax code in the Inflation Reduction Act.\footnote{The bill's contents are summarized in a White House press release \href{https://www.whitehouse.gov/briefing-room/statements-releases/2022/08/15/by-the-numbers-the-inflation-reduction-act/}{here}.} The choice of which project to focus on thus becomes an important tool for designing such proposals.

We model this setting by considering a principal and an agent who face uncertainty about the agent's value for two correlated projects. The principal chooses which project value, if any, to publicly discover, and then proposes a set of projects to the agent. The proposal is enacted if and only if the agent believes the expected sum of project values is positive. We characterize optimal discovery for several natural forms of the principal's preferences\textemdash maximizing the probability of the grand bundle, of having at least one project approved, and of a weighted combination of projects. Our results highlight the usefulness of \textit{trial balloons}: projects which are ex-ante disfavored but have higher variance than the alternative. In contrast to earlier literature, which generally finds benefits from \textit{building consensus} through information about ex-ante favored projects, in our setting trial balloons have important benefits driven by potential ex-post disagreement on the approver's favored project. Even a disfavored project with lower variance than the more-favored alternative may be preferred if it is moderately disfavored or the prior is highly precise.

The trade-off between trial balloons and building consensus is driven by two forces: \textit{spoiling} and \textit{improvement}. Spoiling is the chance that, by obtaining an unfavorable draw, the principal will lose out on the sure-thing payoff that could have been obtained without generating any information. Improvement is the converse\textemdash a favorable draw may raise the agent's belief sufficiently to get ex-ante disfavored projects approved. Trial balloons have a weaker spoiling effect than other discovery rules, though it may be that spoiling is still likely enough for the principal to prefer no discovery at all. Full discovery\textemdash if permitted\textemdash generally has the strongest improvement effect because it introduces the most variance and thus the highest chance of a sufficiently positive realization. Our main results compare the relative magnitudes of these effects to characterize when trial balloons are optimal. We also provide a partial characterization of optimal discovery when the disfavored project has lower variance, as well as a partial comparison of single-project discovery to full discovery.

Using these characterizations, we describe rich comparative statics of the principal's optimal discovery rule. Because higher variance results in a stronger improvement effect, discovering a trial balloon is always better than discovering a consensus project. However, the effect of a more precise prior on optimal discovery depends on how much the principal values having the trial balloon approved. A trial balloon which is less important than the consensus project benefits from a more precise prior, in the sense that the principal prefers it to no discovery even at higher magnitudes of disfavorability. The opposite is true for a more important trial balloon: the range of magnitudes where the principal discovers it shrinks. Both results are driven by the rates at which spoiling and improvement effects vanish as the prior's precision increases. When the disfavored project has lower variance\textemdash and is therefore no longer a trial balloon\textemdash these results are altered by the consensus project's larger improvement and spoiling effects. In the limit, the disfavored project is discovered only when the magnitude of its disfavor is intermediate. If it is too disfavored, the principal prefers no discovery; if it is only mildly disfavored, the principal prefers discovering the consensus project. Optimal discovery away from the high-precision limit is qualitatively opposite to the trial balloon case. When the principal places more weight on the low-variance disfavored project, the intermediate favorability range where it is discovered grows as prior precision increases; less weight causes it to shrink with higher precision. These sharp differences based on the observable features of the projects available to the principal provide clear testable predictions. 

Finally, we consider extensions of our stylized model to better reflect real-world proposals. Although we are motivated by the idea of a constrained principal who cannot fully design information about the projects, the restriction to discovery of a single project is stark: we relax this choice by allowing either one-shot discovery of both projects or a sequential discovery choice. We also consider noisy discovery, where rather than generating conclusive evidence about the project's value to the agent, the principal may only be able to generate a noisy signal. Lastly, we discuss which results remain tractable in a setting with $N$ projects rather than only two. Our qualitative results about the settings in which trial balloons dominate due to the presence of spoiling effects extend, and emphasize how the desire to keep a consensus option on the table may lead to a focus on disfavored projects.

\section{Related Literature}
\label{sec:relLit}
Our work relates to a broad literature on soliciting and using advice from potentially biased experts. \citet{CT2007} and \citet{D2011} describe hard evidence setting where a principal's disclosure rule is designed to ``build consensus" by suggesting to the agent that some underlying state of the world is favorable. Our model focuses instead on the interplay between different dimensions of a state. Perhaps closest to our work is \citet{BC2007}, where a leader can decide whether or not to publicly gather information; this setup resembles a sequential discovery version of our model, but where the ``dimensions" are signals of the same payoff-relevant state. The richer interaction between dimensions in our model means that, unlike in that work, one-shot discovery is a nontrivial problem in our setting.

A literature in political economy describes ``policy bundling," or combining different projects into a single proposal that receives an up-or-down vote. Many of these models are dynamic and study how today's project choice influences tomorrow's bargaining procedure or options.\footnote{For example, \citet{F1990, CE2017b, L2022}, and \citet{BI2022}.} Our model is instead a one-shot game, though we briefly discuss sequential discovery in Section \ref{subsec:seq}. Another key difference is that rather than using a consensus policy featuring full preference alignment and an ideological dimension featuring disagreement, as is the case in \citet{JM2002} and \citet{CE2013}, we allow for flexible levels of alignment. Indeed, our main results (Theorems \ref{thm:trialBalloon} and \ref{thm:optSingle}) highlight how the ratio of agreement to disagreement shapes the principal's optimal discovery rule. Closer to our work is \citet{CE2017}; however, that model focuses on how to bundle policies (using an information design setup) rather than how to communicate about an exogenous policy bundle.

Although we frame our model as one of symmetric uncertainty, where the principal must publicly \textit{discover} information rather than committing to a \textit{disclosure} rule for private information, this distinction is one of interpretation only. The principal's problem can also be viewed as a constrained version of the standard Bayesian persuasion problem (\citealt{KG2011}, with extensions surveyed by \citealt{K2019}) in which the principal chooses a signal structure to maximize a function of the agent's posterior beliefs. Our first constraint only allows the principal to disclose a single dimension of the multidimensional state, capturing feasibility constraints on the principal (such as the ability to generate only one prototype) as well as attention constraints on the agent (such as the ability to focus on only one spending plan in a large budget bill). Of note is a recent result by \citet{MS2022}, which shows that with a high-dimensional state the optimal unconstrained disclosure rule involves dimension reduction. We view this result as complementary, emphasizing the value of focusing on dimensionally-constrained information design. Our second restriction is that, for the chosen dimension, the principal must choose from an exogenously given set of signal structures rather than having access to any Blackwell experiment. Prior work on constrained information design has considered differently-structured limitations;\footnote{\citet{TT2019} studies a principal facing many copies of the same persuasion problem, whereas \citet{TT2021} considers a single problem with noisy communication.} the closest to our restrictions is the general framework in \citet{B2021}. However, that work focuses primarily on high-level characterizations and computational results rather than a tight description of optimal discovery rules.

\section{Model}
\label{sec:model}
There are two projects $i \in \left\{1, 2\right\}$, which are relevant to both a principal (she) and an agent (he). Let $v = (v_1, v_2) \in \mathbb{R}^2$ represent the agent's value for each project, and $w = (w_1, w_2) \in [0, 1]^2$ with $w_1 + w_2 = 1$ be the relative weights the principal places on them. The principal and the agent share a common prior $F$ from which $v$ is drawn,
\begin{equation*}
	F \sim N\bigg(
	\begin{bmatrix}
		\mu_1 \\ \mu_2
	\end{bmatrix},
	\begin{bmatrix}
		\sigma_1^2 & \rho \sigma_1 \sigma_2\\
		\rho \sigma_1 \sigma_2 & \sigma_2^2
	\end{bmatrix}
	\bigg),
\end{equation*}
so that the marginals $F_1$ and $F_2$ are normally distributed with variances $\sigma_1^2$ and $\sigma_2^2$ respectively, and have correlation $\rho \in (0, 1)$.\footnote{This choice is important for our results; it reflects a world in which projects are pre-screened to be similar, e.g., a Democratic budget bill generally includes higher spending than a Republican one.} We now describe the timing of the discovery game, noting variations as they arise:
\begin{enumerate}
	\item The principal chooses whether to discover $v_1$, $v_2$, or neither.\footnote{In Section \ref{subsec:both}, we also consider the possibility of discovering both values.}
	\item The value $v_i$ of the chosen project, if any, is publicly revealed.\footnote{In Section \ref{subsec:noisy}, we consider the case where a noisy public signal of $v_i$ is observed, so that $v_i$ is not known with certainty.} The agent forms a posterior belief $G$ about $v$ by using Bayes's Rule.
	\item The principal makes a proposal by selecting a subset of projects $S$. This subset need not contain the discovered project (and also need not exclude it).
	\item The agent approves the proposal so long as $\sum_{i \in S} \mathbb{E}_G[v_i] \geq 0$.
	\item The principal receives $\sum_{i \in S} w_i$ if her proposal is approved, and 0 otherwise.\footnote{In Section \ref{subsec:benchmark}, we consider other variations of the principal's utility.}
\end{enumerate}
This formal description allows us to make precise the language of trial balloons and consensus projects from the introduction. Using a \textit{trial balloon} means discovering a project $i$ with $\mu_i < \mu_j$ because project $i$ is ex-ante less favored by the agent, as well as $\sigma_i \geq \sigma_j$ because the trial balloon's value is more uncertain. We refer to a project $i$ with $\mu_i < \mu_j$ without restriction on its variance as \textit{disfavored} and call project $j$ a \textit{consensus project}. This description is a slight misuse of notation because it may be that $\mu_j < 0$ and both projects are ex-ante disfavored, but we will focus especially on the case where $\mu_i < 0 < \mu_j$, so that the consensus project is, as the name suggests, ex-ante open to approval by the agent. This case allows us to clearly contrast the spoiling and improvement effects discussed in the introduction.

\section{Single-Project Discovery}
\label{sec:bvNorm}
In this section, we fully characterize the cases in which discovering the value of a trial balloon is best for the principal. We also describe optimal discovery of disfavored projects even when their variance is lower than that of the favored alternative. The principal's decision in the single-project discovery setting is fully captured by the spoiling versus improvement trade-off. Both effects will favor trial balloons, a result we formalize in Theorem \ref{thm:trialBalloon}, though as we show in that result it is still important to consider the settings in which the principal prefers no discovery at all. However, even when a disfavored project has a lower variance, and thus is less effective at improving the agent's beliefs, we show in Theorem \ref{thm:optSingle} that its lower chance of spoiling may still dominate for an intermediate range of favorabilities. These main results also provide two-part conditions which describe the effects of increasing prior precision depending on whether the sender places low or high weight on the disfavored project. We defer full proofs of all results to Appendix A, and provide intuitions in the main text.

\subsection{Benchmark Cases}
\label{subsec:benchmark}
Before characterizing trial balloon optimality, we first introduce some results for simple benchmark cases that restrict the prior distribution of $v$ or the principal's preferences. These insights are useful steps in proving more general results, but may also be of independent interest to model particular settings.

To build intuition, assume $\rho = 1$, so that the two projects are perfectly correlated. This choice abstracts away from the comparison of trial balloons versus building consensus and instead focuses on the simplest trade-off between improving beliefs and spoiling an ex-ante favored project:
\begin{prop}[Perfectly Correlated Projects]
	\label{prop:fullCor}
	When $\rho = 1$, then discovering one project is the same as discovering both, and the following is optimal:
	\begin{enumerate}
		\item If $\mu_1 + \mu_2 \geq 0$: no discovery.
		\item Else if $\mu_1 < 0$ and $\mu_2 < 0$: full discovery.
		\item Else $\mu_i = -\mu < 0 \leq c\mu = \mu_j$ for $c \in [0, 1)$. Then there is a cutoff $c^* \in [0, 1]$ such that either
		\begin{enumerate}
			\item Very disfavored project 1: $c < c^*$, no discovery.
			\item Mildly disfavored project 1: $c \geq c^*$, full discovery.
		\end{enumerate}
	\end{enumerate}
\end{prop}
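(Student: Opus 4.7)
The plan exploits that $\rho=1$ forces $v_{2}=\mu_{2}+(\sigma_{2}/\sigma_{1})(v_{1}-\mu_{1})$ almost surely, so observing either coordinate reveals both. This immediately gives the ``single discovery $=$ full discovery'' claim and reduces every case to comparing \emph{no discovery} (pick the best $S \subseteq \{1,2\}$ subject to $\sum_{i \in S} \mu_{i} \geq 0$) against \emph{full discovery} (pick the best $S$ subject to $\sum_{i \in S} v_{i} \geq 0$, after learning $(v_{1},v_{2})$ exactly).

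Cases (1) and (2) are then immediate. If $\mu_{1}+\mu_{2} \geq 0$, the grand bundle is accepted under the prior, so no discovery attains the maximal payoff $w_{1}+w_{2}=1$ and is weakly optimal. If $\mu_{1},\mu_{2}<0$, every nonempty subset is rejected under the prior, so no discovery gives $0$, whereas the unbounded support of the normal yields $P(v_{1}+v_{2} \geq 0)>0$ under full discovery and hence a strictly positive expected payoff.

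For case (3), write $\mu_{i}=-\mu<0$ and $\mu_{j}=c\mu \geq 0$. The structural key is that perfect positive correlation together with $\mu_{i}<0 \leq \mu_{j}$ enforces the deterministic implication $v_{i} \geq 0 \Rightarrow v_{j}>0$. Two consequences follow: the singleton $\{i\}$ is never ex-post optimal (whenever $v_{i} \geq 0$ the grand bundle weakly dominates it), and the event $\{v_{1}+v_{2} \geq 0,\,v_{j}<0\}$ is empty, so $\{v_{1}+v_{2} \geq 0\} \subseteq \{v_{j} \geq 0\}$. Using this inclusion to collapse the expectation, the full-discovery payoff reduces to
\[
\Pi_{F}(c) \;=\; w_{i}\,\Phi\!\left(\tfrac{(c-1)\mu}{\sigma_{1}+\sigma_{2}}\right) + w_{j}\,\Phi\!\left(\tfrac{c\mu}{\sigma_{j}}\right),
\]
while no discovery yields $\Pi_{N}=w_{j}$ by proposing $\{j\}$ alone. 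Both terms of $\Pi_{F}(c)$ are continuous and strictly increasing in $c \in [0,1)$, so $\Pi_{F}(c)-\Pi_{N}$ is strictly monotone in $c$ and crosses zero at most once, producing the cutoff $c^{\ast} \in [0,1]$ (set $c^{\ast}=0$ if $\Pi_{F}(0) \geq w_{j}$ and $c^{\ast}=1$ if $\Pi_{F}(c)<w_{j}$ throughout $[0,1)$).

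The main obstacle is establishing the deterministic implication $v_{i} \geq 0 \Rightarrow v_{j}>0$ and leveraging it to (i) eliminate the singleton $\{i\}$ from the ex-post proposal menu and (ii) show $\{v_{1}+v_{2} \geq 0\} \subseteq \{v_{j} \geq 0\}$, which decouples the two approval probabilities into independent monotone functions of $c$. Once that decoupling is in hand, $\Pi_{F}$ is a weighted sum of two normal CDFs, each strictly increasing in $c$, and the existence of $c^{\ast}$ follows from continuity and the intermediate value theorem.
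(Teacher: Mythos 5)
Your proof is correct, and case (3) is handled by a genuinely different (and in one respect tighter) route than the paper's. The paper's own proof anchors the cutoff indirectly: it picks the special value of $c$ at which, by symmetry of the marginal of $v_i$, the spoiling probability $\mathbb{P}(v_j < 0)$ equals the improvement probability $\mathbb{P}(v_i + v_j > 0)$, signs the comparison there according to whether $w_j \lessgtr 1/2$, computes the limiting payoffs as $c \to 0$ or $c \to 1$, and then invokes continuity to obtain a cutoff (also extracting explicit conditions for the degenerate cases $c^* = 0$ and $c^* = 1$). You instead use the deterministic implication $v_i \geq 0 \Rightarrow v_j > 0$ to collapse the ex-post proposal problem and write the full-discovery payoff in closed form, $\Pi_F(c) = w_i\,\Phi\bigl((c-1)\mu/(\sigma_1+\sigma_2)\bigr) + w_j\,\Phi\bigl(c\mu/\sigma_j\bigr)$, against the constant $\Pi_N = w_j$; strict monotonicity of $\Pi_F$ in $c$ then delivers single crossing directly. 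This is essentially the $\rho = 1$ specialization of the decomposition the paper only introduces later, in the proof of Theorem \ref{thm:trialBalloon}, and it buys something the paper's argument leaves implicit: the ``if and only if'' character of the cutoff requires single crossing, which the paper's continuity-plus-endpoints argument does not establish, whereas your monotonicity argument does. It also yields the indifference condition explicitly (e.g.\ $c^* = \sigma_j/(\sigma_i + 2\sigma_j)$ at equal weights, consistent with the symmetry calculation in the paper's proof). What the paper's route buys in exchange is the interpretation in terms of spoiling versus improvement probabilities and ready-made parameter conditions (in terms of $\Phi(-\mu/\sigma)$ thresholds) for when $c^*$ hits the boundary; your parenthetical convention for $c^* \in \{0,1\}$ covers the same ground, just less explicitly. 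The only nitpick is the phrase ``both terms \ldots are strictly increasing'': if $w_i = 0$ or $w_j = 0$ one term is identically zero, but the weighted sum is still strictly increasing since $w_i + w_j = 1$, so the single-crossing conclusion is unaffected.
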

Using the structure of the bivariate normal distribution, we obtain a closed-form expression for $c^*$, including conditions for the cases when $c^* = 0$ so that there is always full discovery or $c^* = 1$ so that there is never full discovery. For small $c$, the principal is unwilling to risk spoiling the sure thing, but as $c$ grows larger the possibility of improvement becomes more tempting. To showcase the effect of variance, consider the case of $w_1 = w_2 = 1/2$, where we obtain a simple expression, $c^* = 1/(2 - \sigma_1/\sigma_2)$. As overall variance increases (captured by increasing both $\sigma_1$ and $\sigma_2$), the probabilities of spoiling and improvement both increase. However, even in this simple case the net effect of increased variance is uncertain.

To highlight the role of project variance in improvement, we can modify the principal's preferences so that she \textit{only} gets positive utility if both projects are approved. In this case, straightforward computation of the explicit form of the agent's posterior is all that is needed for the result:
\begin{prop}[Complementary Projects]
	\label{prop:bundle}
	Assume the principal only cares about the probability of obtaining the grand bundle, and $\mu_1 + \mu_2 < 0$ so that it cannot be obtained with certainty. When discovering only one project, the higher-variance one is always preferred. Discovering both projects is strictly better than discovering either individually.
\end{prop}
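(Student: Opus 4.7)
The plan is to use the explicit conditional-expectation formula for the bivariate normal to reduce each discovery rule to a probability of the form $\Phi(\mu/s)$ for an appropriate scale $s$, and then compare the scales. Since the principal only benefits from the grand bundle, the relevant event after discovery of $v_i$ is $\mathbb{E}_G[v_i + v_j] \geq 0$, so I only need to compute this posterior sum and read off its probability.

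First I would substitute the normal conditional mean $\mathbb{E}[v_j \mid v_i] = \mu_j + \rho(\sigma_j/\sigma_i)(v_i - \mu_i)$ into the approval condition $v_i + \mathbb{E}[v_j \mid v_i] \geq 0$. Solving this linear inequality in $v_i$ produces a cutoff $t_i$, and after simplification the standardized quantity $(\mu_i - t_i)/\sigma_i$ collapses to
\begin{equation*}
	\frac{\mu_1 + \mu_2}{\sigma_i + \rho\sigma_j}.
\end{equation*}
Hence the approval probability when discovering only project $i$ equals $\Phi\!\left(\frac{\mu_1+\mu_2}{\sigma_i + \rho\sigma_j}\right)$. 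Because $\mu_1 + \mu_2 < 0$, this is increasing in $\sigma_i + \rho\sigma_j$, and the difference $(\sigma_i + \rho\sigma_j) - (\sigma_j + \rho\sigma_i) = (1-\rho)(\sigma_i - \sigma_j)$ is strictly positive exactly when $\sigma_i > \sigma_j$, using $\rho < 1$. That delivers the single-project claim.

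For the full-discovery comparison, the approval event becomes $v_1 + v_2 \geq 0$ with no residual uncertainty absorbed into a conditional mean; since $v_1 + v_2 \sim N(\mu_1 + \mu_2, \sigma_1^2 + \sigma_2^2 + 2\rho\sigma_1\sigma_2)$, the approval probability is $\Phi\!\left(\frac{\mu_1+\mu_2}{\sqrt{\sigma_1^2 + \sigma_2^2 + 2\rho\sigma_1\sigma_2}}\right)$. Taking WLOG $\sigma_1 \geq \sigma_2$, I would show $\sqrt{\sigma_1^2 + \sigma_2^2 + 2\rho\sigma_1\sigma_2} > \sigma_1 + \rho\sigma_2$ by squaring both sides; the difference is $\sigma_2^2(1-\rho^2) > 0$, which is strict because $\rho < 1$ and $\sigma_2 > 0$. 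Combined with the monotonicity of $\Phi$ at a negative argument, this establishes the second claim.

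There is no real obstacle here; the computation is routine once one writes down the posterior distribution. The only subtle point is ensuring the strict inequalities hold in the degenerate edge cases, which follows from the assumptions $\rho \in (0,1)$ and $\sigma_i > 0$ built into the model, and from $\mu_1 + \mu_2 < 0$ (which guarantees the comparison is on the interior of the $\Phi$ curve rather than at $\Phi(0)$).
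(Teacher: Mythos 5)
Your proposal is correct and follows essentially the same route as the paper: both compute the posterior mean $\mathbb{E}[v_j \mid v_i]$, reduce each discovery rule to a probability $\Phi\big((\mu_1+\mu_2)/s\big)$ with scale $s = \sigma_i + \rho\sigma_j$ or $s = \sqrt{\sigma_1^2+\sigma_2^2+2\rho\sigma_1\sigma_2}$, and compare scales using $\mu_1+\mu_2<0$ and $\rho<1$. The comparisons $(1-\rho)(\sigma_i-\sigma_j)$ and $\sigma_2^2(1-\rho^2)$ match the paper's calculations, so nothing is missing.
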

Intuitively, there is no issue of spoiling, because preserving a single ex-ante favored project is useless for a principal who acts as if the projects are perfectly complementary. Thus all that matters is the possibility of an above-mean realization that raises the agent's posterior belief, which is affected only by the variances of the distributions from which discovered values are drawn. This case captures settings where the principal's agenda control is more limited. For example, a firm advertising an open position can choose which aspects of the job to feature in an ad campaign (and can ``commit" to the campaign by rolling out ads before beginning to interview workers) but is only interested in attracting workers who are willing to do all the projects the firm considers part of the job. Our result suggests that firms benefit from focusing on a project where potential employees' opinions are widely dispersed regardless of the average candidate's opinion.

We also consider the case of perfectly substitutable projects, requiring that both projects are ex-ante disfavored to ensure there is some discovery:
\begin{prop}[Substitutable Projects]
	\label{prop:subs}
	Assume the principal cares only about the probability of having at least one project approved. Let $\mu_1 < \mu_2 < 0$ so that approval cannot be obtained with certainty. Discovering the project $i$ satisfying $\sigma_i/\mu_i < \sigma_j/\mu_j$ is better than discovering project $j$, and discovering both projects is strictly better than discovering either individually.
\end{prop}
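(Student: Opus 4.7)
The plan is to compute the approval probability under each discovery regime explicitly using the bivariate normal conditional mean, and then compare them. Because the principal chooses any nonempty $S$ and approval requires $\sum_{i \in S}\mathbb{E}_G[v_i] \geq 0$, she succeeds iff $\max(\mathbb{E}_G[v_1], \mathbb{E}_G[v_2], \mathbb{E}_G[v_1] + \mathbb{E}_G[v_2]) \geq 0$, which is equivalent to $\max(\mathbb{E}_G[v_1], \mathbb{E}_G[v_2]) \geq 0$. With $\mu_1, \mu_2 < 0$, no discovery never yields approval. Under both-discovery the approval event is $A_B = \{v_1 \geq 0\} \cup \{v_2 \geq 0\}$, and under single-$i$-discovery, using $\mathbb{E}[v_j \mid v_i] = \mu_j + \rho(\sigma_j/\sigma_i)(v_i - \mu_i)$, approval is equivalent to $v_i \geq \min\{0, T_j^i\}$ with $T_j^i = \mu_i - \mu_j\sigma_i/(\rho\sigma_j)$, giving
\[
P(A_i) = \Phi\!\left(-\min\!\left\{\tfrac{|\mu_i|}{\sigma_i},\,\tfrac{|\mu_j|}{\rho\sigma_j}\right\}\right).
\]

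For the single-versus-single comparison, because both $\mu$'s are negative the hypothesis $\sigma_i/\mu_i < \sigma_j/\mu_j$ is equivalent to $|\mu_i|/\sigma_i < |\mu_j|/\sigma_j$. Using $\rho < 1$: for $P(A_i)$ the minimum equals $|\mu_i|/\sigma_i$, since $|\mu_j|/\sigma_j < |\mu_j|/(\rho\sigma_j)$ and the hypothesis gives $|\mu_i|/\sigma_i < |\mu_j|/\sigma_j$; for $P(A_j)$, both $|\mu_j|/\sigma_j$ and $|\mu_i|/(\rho\sigma_i)$ strictly exceed $|\mu_i|/\sigma_i$. Monotonicity of $\Phi$ then yields $P(A_i) > P(A_j)$.

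For the claim that discovering both strictly dominates, note that $A_B$ and $A_i$ agree on $\{v_i \geq 0\}$, so
\[
P(A_B) - P(A_i) = P(v_j \geq 0,\, v_i < 0) - P(T_j^i \leq v_i < 0).
\]
When $T_j^i \geq 0$ the second term vanishes and the first is strictly positive. Otherwise I would standardize via $u = (v_i - \mu_i)/\sigma_i$ and $w = (v_j - \mu_j)/\sigma_j$, setting $a = |\mu_i|/\sigma_i$ and $b = |\mu_j|/\sigma_j$, so the inequality reduces to
\[
\int_{-\infty}^{b/\rho} \Phi\!\left(\tfrac{\rho u - b}{\sqrt{1-\rho^2}}\right)\phi(u)\,du \;>\; \int_{b/\rho}^{a} \Phi\!\left(\tfrac{b - \rho u}{\sqrt{1-\rho^2}}\right)\phi(u)\,du.
\]
The substitution $u \mapsto 2(b/\rho) - u$ in the right-hand integral turns its $\Phi$-kernel into the left-hand one, while mapping the interval into $[2(b/\rho) - a,\, b/\rho]$ and replacing $\phi(u)$ by the reflected density $\phi(2(b/\rho) - u)$. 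The difference therefore equals a strictly positive tail integral over $(-\infty, 2(b/\rho) - a)$ plus a weighted integral of $\phi(u) - \phi(2(b/\rho) - u)$ over $[2(b/\rho) - a, b/\rho]$; since $u + (2(b/\rho) - u) = 2b/\rho > 0$ and $u \leq b/\rho \leq 2(b/\rho) - u$, the reflected point is strictly farther from zero, so unimodality of $\phi$ makes that integrand strictly positive.

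The main obstacle is the case $T_j^i < 0$: here single-$i$-discovery can succeed even when $v_j$ is negative, by exploiting the agent's residual uncertainty about $v_j$, so revealing more does not pointwise help the principal. The reflection argument around $b/\rho$ is the delicate step that shows the ex-ante balance nonetheless favors revealing both values.
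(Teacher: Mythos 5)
Your proof is correct and follows essentially the same route as the paper: the same closed-form approval probabilities written as a single tail event with threshold $\min\{|\mu_i|/\sigma_i,\,|\mu_j|/(\rho\sigma_j)\}$, the same case split on whether the threshold $T_j^i$ (the paper's $v^*$) is negative, and the same reflection about the point where $\mathbb{E}[v_j\mid v_i]=0$, with the comparison settled by the normal density's monotonicity in distance from its mean. The only differences are cosmetic—you work in standardized coordinates and keep the far tail $(-\infty,\,2(b/\rho)-a)$ as an explicit strictly positive remainder where the paper instead truncates the left integral and argues pointwise.
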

As in the previous proposition, there is once again no possibility of spoiling, because the presence of any ex-ante favored project in this environment would trivially lead to no discovery. Thus it is still the ratio of variances that determines which project is discovered, with the modification that a lower-variance project may be discovered if its low mean makes it most likely to yield a positive draw. Full discovery is better than discovery of any individual project because, given both projects have negative means, the law of iterated expectations ensures that $\mathbb{E}[\mathbb{E}[\mu_i | v_j]] = \mu_i < 0$\textemdash the average posterior mean of an undiscovered project is unchanged. Thus the principal does better in expectation by discovering that project as well, and chooses to do so by discovering both projects at once if possible. Although this case is less prominent in the real world than that of perfect complementarity, we believe it may still capture some organizational settings. For example, a middle manager may need an executive's approval to hire an additional employee for a team, but only has one position to fill; our result pins down the trade-off between hiring a more favored candidate (as captured by the mean) and hiring one with a better chance of a surprising performance (as captured by the variance).

\subsection{Optimality of Trial Balloons}
\label{subsec:highVar}
We now turn to the general case of $\rho \in (0, 1)$ and a principal with weakly positive weights $w = (w_1, w_2)$ for the two projects satisfying $w_1 + w_2 = 1$. Our first result characterizes optimal single-project discovery when no project is ex-ante open to approval by the agent:
\begin{prop}[Discovery with No Favored Projects]
	\label{prop:disfavor}
	Let $\mu_1 < \mu_2 < 0$, and let $w_1 = 1/2$. Fixing $\sigma_2$, there exists $\underline{\sigma} \in (\sigma_2, (\mu_1/\mu_2) \, \sigma_2)$ such that discovering project 1 is best if and only if $\sigma_1 > \underline{\sigma}$. Otherwise, discovering project 2 is best.
\end{prop}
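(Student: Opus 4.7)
I break the argument into three parts.

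\emph{Step 1 (Closed-form payoff).} With $w_1 = w_2 = 1/2$, the principal's realized utility equals half the number of approved projects in her proposal. After discovering $v_i$, the posterior mean of the other project is the affine-increasing function $\hat\mu_j(v_i) := \mu_j + \rho(\sigma_j/\sigma_i)(v_i - \mu_i)$. A short case analysis of her optimal proposal yields
\begin{equation*}
\mathbb{E}[U_i] = \tfrac{1}{2}(a_i + b_i),
\end{equation*}
where $b_i := \Pr\bigl(v_i + \hat\mu_j(v_i) \geq 0\bigr)$ corresponds to grand-bundle approval and $a_i := \Pr\bigl(v_i \geq 0 \text{ or } \hat\mu_j(v_i) \geq 0\bigr)$ to at-least-one approval. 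Bivariate normality then gives $b_i = \Phi\bigl((\mu_1 + \mu_2)/(\sigma_i + \rho\sigma_j)\bigr)$ (since $v_i + \hat\mu_j(v_i) = \mathbb{E}[v_1 + v_2 \mid v_i]$ is normal with mean $\mu_1+\mu_2$ and variance $(\sigma_i + \rho\sigma_j)^2$); and since $\hat\mu_j$ is affine increasing, the union event in $a_i$ reduces to $\{v_i \geq \min(0, k_{ij})\}$ with $k_{ij} := \mu_i + |\mu_j|\sigma_i/(\rho\sigma_j)$, giving a piecewise-$\Phi$ expression for $a_i$.

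\emph{Step 2 (Endpoint comparison).} Let $f(\sigma_1) := \mathbb{E}[U_1] - \mathbb{E}[U_2]$. At $\sigma_1 = \sigma_2$, the denominators in $b_i$ coincide so $b_1 = b_2$, and $|\mu_1| > |\mu_2|$ forces $a_1 < a_2$ in both branches of the piecewise formula, yielding $f(\sigma_2) < 0$. At $\sigma_1 = (|\mu_1|/|\mu_2|)\sigma_2$, the ratios $|\mu_i|/\sigma_i$ coincide, so substitution into the piecewise formula gives $a_1 = a_2 = \Phi(-|\mu_2|/\sigma_2)$; meanwhile $\sigma_1 > \sigma_2$ combined with $\mu_1 + \mu_2 < 0$ makes $\sigma_1 + \rho\sigma_2 > \sigma_2 + \rho\sigma_1$, hence $b_1 > b_2$ and $f > 0$. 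Continuity of $f$ and the intermediate value theorem then produce $\underline\sigma$ in the claimed open interval. Outside the interval, the same piecewise expressions yield $a_1 < a_2$ and $b_1 < b_2$ (so $f < 0$) for $\sigma_1 < \sigma_2$, and $a_1 > a_2$ with $b_1 > b_2$ (so $f > 0$) for $\sigma_1 > (|\mu_1|/|\mu_2|)\sigma_2$, extending the iff conclusion globally.

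\emph{Main obstacle.} The iff statement requires the crossing to be unique, which I would establish by showing $f$ is strictly increasing on the interval. The key structural observation is that throughout $[\sigma_2, (|\mu_1|/|\mu_2|)\sigma_2]$ we have $k_{21} > 0$ (since $\rho < 1$), so $a_2 = \Phi(-|\mu_2|/\sigma_2)$ is constant in $\sigma_1$, while $a_1$ is continuous and weakly increasing. The delicate point is that $b_1 - b_2$ need not be monotone throughout the interval: differentiating $b_i$ gives ratios involving $\phi(y_1)/(\sigma_1+\rho\sigma_2)^2$ and $\rho\phi(y_2)/(\sigma_2+\rho\sigma_1)^2$ whose comparison is sign-ambiguous. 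One must therefore combine $\partial(a_1 - a_2)/\partial\sigma_1$ and $\partial(b_1 - b_2)/\partial\sigma_1$, split on the sign of $k_{12}$ (which may change in the interval when $\rho > |\mu_2|/|\mu_1|$), and verify continuity of the derivative across $k_{12} = 0$; algebraic cancellation should then yield $f' > 0$ and hence uniqueness of $\underline\sigma$.
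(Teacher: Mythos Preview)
Your approach is essentially the paper's: the decomposition $\mathbb{E}[U_i]=\tfrac12(a_i+b_i)$ is exactly the paper's identity $\tfrac12\mathbb{P}(\text{both})+\tfrac12\mathbb{P}(\text{at least one})$, and your endpoint checks are precisely what the paper obtains by citing Propositions~\ref{prop:bundle} and~\ref{prop:subs}. The only divergence is on uniqueness: the paper dispatches it in one sentence (``both expressions are increasing more quickly in $\sigma_1$ when discovering project 1''), whereas you correctly flag that $b_1-b_2$ is not obviously monotone and propose a more careful derivative argument---your caution here is warranted, as the paper's justification is terse.
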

Imposing the condition $w_1 = w_2 = 1/2$ on the relative weights allows us to decompose the principal's utility into a ``complements term" and a ``substitutes term," and invoke the results of Propositions \ref{prop:bundle} and \ref{prop:subs} to establish which discovery rule maximizes each of those terms. The trial balloon under-performs relative to the complements case but over-performs relative to the substitutes case. Although it is clear that increasing $w_i$ favors discovery of project $i$, we are unable to verify that the utilities of these discovery rules remain single-crossing. However, thorough simulation\footnote{We use a 1,000 point grid for each of $\rho$, $w_1$, $\mu_1/\mu_2$, and $\sigma_1/\sigma_2$.} suggests that single-crossing is preserved for all weights, suggesting that the relationship above would hold for a decreasing continuous cutoff function $\underline{\sigma}(w_1)$.

To sharpen the distinction between trial balloons and consensus projects, we turn to the case where the former is ex-ante disfavored by the agent, whereas the latter is ex-ante open to approval. 
\begin{theorem}[Optimal Trial Balloons]
	\label{thm:trialBalloon}
	Let $\mu_1 = -\mu < 0 \leq c \mu = \mu_2$ for $c \in [0, 1)$ and let $\sigma_1 \geq \sigma_2$, so that project 1 is a trial balloon.\\
	For any $\mu$, there is a cutoff $\underline{c}(\mu)$ such that it is optimal to discover neither project when $c < \underline{c}(\mu)$ and to discover project 1 when $c > \underline{c}(\mu)$.\\
	The function $\underline{c}(\mu)$ is continuous and has the following properties:
	\begin{enumerate}
		\item If $w_1 = 1/2$, then
		$$\underline{c}(\mu) = c^* := \frac{\rho \sigma_2}{\sigma_1 + 2 \rho \sigma_2} \qquad \text{for all } \mu.$$
		\item If $w_1 > 1/2$, there is $\mu^* \in \mathbb{R}_+$ such that $\underline{c}(\mu) = 0$ for all $\mu \in [0, \mu^*]$. For $\mu > \mu^*$, the cutoff $\underline{c}(\mu)$ is strictly increasing with
		$\lim_{\mu \rightarrow \infty} \underline{c}(\mu) = c^*.$
		\item If $w_1 < 1/2$, there is $\mu^* \in \mathbb{R}_+$ such that $\underline{c}(\mu) = 1$ for all $\mu \in [0, \mu^*]$. For $\mu > \mu^*$, the cutoff $\underline{c}(\mu)$ is strictly decreasing with
		$\lim_{\mu \rightarrow \infty} \underline{c}(\mu) = c^*.$
	\end{enumerate}
\end{theorem}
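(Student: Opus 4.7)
The plan is to reduce the principal's three single-project options---no discovery, discover project 1, discover project 2---to closed-form normal-CDF expressions, then characterize the cutoff from a single break-even equation. With no discovery, the best proposal is project 2 alone, yielding $U_0 = w_2$ (since $\mu_1 + \mu_2 = (c-1)\mu \leq 0$ rules out the bundle). After discovering $v_i$, the posterior of $v_j$ is normal with mean linear in $v_i$, so the events that $\{i\}$, $\{j\}$, and the bundle $\{1,2\}$ are approvable are half-lines in $v_i$ whose probabilities are shifts of $\Phi$. The key structural step is to verify, for $c \geq 0$, the nesting $\{v_1 \geq 0\} \subseteq \{v_1 + \mathbb{E}[v_2 \mid v_1] \geq 0\} \subseteq \{\mathbb{E}[v_2 \mid v_1] \geq 0\}$ (and its analogue for discovery of project 2), which collapses the optimal-proposal envelope to $\mathbb{E}[U_1] = w_1 \Phi(\mu(c-1)/(\sigma_1 + \rho\sigma_2)) + w_2 \Phi(c\mu/(\rho\sigma_2))$ and $\mathbb{E}[U_2] = w_1 \Phi(\mu(c-1)/(\sigma_2 + \rho\sigma_1)) + w_2 \Phi(c\mu/\sigma_2)$.

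Since $\sigma_1 \geq \sigma_2$ and $\rho < 1$ give $\sigma_1 + \rho\sigma_2 \geq \sigma_2 + \rho\sigma_1$ and $\rho\sigma_2 \leq \sigma_2$, $\mathbb{E}[U_1]$ dominates $\mathbb{E}[U_2]$ term-by-term, so the principal's choice reduces to no discovery versus trial-balloon discovery. Define $f(c, \mu) := \mathbb{E}[U_1] - w_2 = w_1 \Phi(-(1-c)\mu/(\sigma_1 + \rho\sigma_2)) - w_2 \Phi(-c\mu/(\rho\sigma_2))$; the first factor is strictly increasing and the second strictly decreasing in $c$, so $f$ is strictly increasing in $c$ and $\{c : f(c, \mu) \geq 0\}$ is an interval $[\underline{c}(\mu), 1)$. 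This gives the cutoff structure, and the implicit function theorem (using $\partial f/\partial c > 0$) delivers continuity of $\underline{c}(\mu)$. For $w_1 = 1/2$, strict monotonicity of $\Phi$ reduces $f = 0$ to equality of arguments, $(1-c)/(\sigma_1 + \rho\sigma_2) = c/(\rho\sigma_2)$, yielding $c^* = \rho\sigma_2/(\sigma_1 + 2\rho\sigma_2)$ independent of $\mu$.

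For $w_1 > 1/2$, examine $c = 0$: $f(0, \mu) = w_1 \Phi(-\mu/(\sigma_1 + \rho\sigma_2)) - w_2/2$ equals $(w_1 - w_2)/2 > 0$ at $\mu = 0$, decreases strictly in $\mu$ toward $-w_2/2$, and so $f(0, \mu^*) = 0$ at a unique $\mu^* > 0$; for $\mu \leq \mu^*$, $f \geq 0$ on all of $[0, 1)$ and $\underline{c}(\mu) = 0$. The $w_1 < 1/2$ case is symmetric via $c \to 1$: $\lim_{c \to 1} f(c, \mu) = w_1/2 - w_2 \Phi(-\mu/(\rho\sigma_2))$ switches sign at a unique $\mu^* > 0$, yielding $\underline{c}(\mu) = 1$ below it. For $\mu > \mu^*$ in either case, the cutoff lies in the open interval and solves $f(\underline{c}(\mu), \mu) = 0$.

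To conclude, $\underline{c}(\mu) \to c^*$ follows from the Mills-ratio tail $\Phi(-y) \sim \phi(y)/y$: matching leading $\mu^2$ exponents in $w_1 \Phi(-k_a(1-c)\mu) = w_2 \Phi(-k_b c\mu)$, with $k_a = 1/(\sigma_1 + \rho\sigma_2)$ and $k_b = 1/(\rho\sigma_2)$, forces $k_a(1-c) = k_b c$, i.e., $c \to c^*$. For strict monotonicity on $(\mu^*, \infty)$, implicit differentiation gives $\mathrm{sgn}(\underline{c}'(\mu)) = \mathrm{sgn}[g(k_a(1-c)\mu) - g(k_b c\mu)]$, where $g(x) := x\phi(x)/\Phi(-x)$. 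Since the normal hazard rate is strictly increasing on $\mathbb{R}_+$, so is $g$, and the sign equals $\mathrm{sgn}(c^* - c)$. The main obstacle is preventing the cutoff from crossing $c^*$: direct evaluation shows $f(c^*, \mu) = 0$ would require $w_1 = w_2$, so in cases (2) and (3) the cutoff remains strictly on one side of $c^*$ (below for $w_1 > 1/2$, above for $w_1 < 1/2$) and is therefore strictly monotone throughout.
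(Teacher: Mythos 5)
Your proposal is correct and follows essentially the same route as the paper's proof: the same decomposition of payoffs into $w_1\,\mathbb{P}(\text{bundle}) + w_2\,\mathbb{P}(\text{project 2 approvable})$, the same closed-form expressions $\Pi^1,\Pi^2$, term-by-term dominance of project 1 over project 2 under $\sigma_1\geq\sigma_2$, single-crossing in $c$ against the no-discovery payoff $w_2$, the implicit function theorem for continuity and monotonicity, and an exponential-tail argument for $\underline{c}(\mu)\to c^*$. Your handling of monotonicity is in fact a clean refinement of the paper's: evaluating $f(c^*,\mu)=(w_1-w_2)\Phi\bigl(-(1-c^*)\mu/(\sigma_1+\rho\sigma_2)\bigr)$ to pin which side of $c^*$ the cutoff lies on, and signing $\underline{c}'(\mu)$ via the increasing function $x\phi(x)/\Phi(-x)$, replaces the paper's comparative-in-$w_1$ bounding of the density ratio with an exact sign characterization.
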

Given the variance condition, the trial balloon is always better than building consensus because it is both less likely to spoil the favored project and more likely to deliver an above-mean realization that attains the grand bundle. The remaining question is whether the trial balloon is better than no discovery at all.\footnote{Although we focus on optimality of trial balloons, the proof of this result can also be used to compare discovery of the consensus project to no discovery.} A large value of $c$, meaning that the trial balloon's disfavor is about equal in magnitude to the consensus project's favor, encourages discovery by making spoiling less likely. However, there is a subtle interaction between the minimal such $c$, the precision of the agent's prior (as captured by taking $\mu$ to be large relative to a fixed ratio of $\sigma_1$ and $\sigma_2$), and the weight $w_1$ assigned to the trial balloon. When the principal places high importance on the trial balloon, spoiling is not especially concerning compared to obtaining approval for the trial balloon via improvement. Thus, low precision encourages discovery by making both effects strong, and with especially low precision trial balloons are in fact optimal at all values of $c$. This low cutoff comes with a cost, though: for $c < c^*$, improvement vanishes more quickly than spoiling as precision increases. Thus for a fixed $c$ in this range, no discovery will eventually dominate discovering the trial balloon. Thus the cutoff when $w_1 > 1/2$ must increase in $\mu$. The effect when $w_1 < 1/2$, so the principal places low weight on the trial balloon, is the opposite: the cutoff for trial ballon discovery begins above $c^*$, so improvement vanishes more slowly than spoiling as $\mu$ increases, and for any $c > c^*$ the trial balloon eventually dominates. Figure \ref{fig:trialBalloon} provides an example of the cutoff in each of these two cases.

\begin{figure}
	\centering
	\includegraphics[width=0.45\hsize]{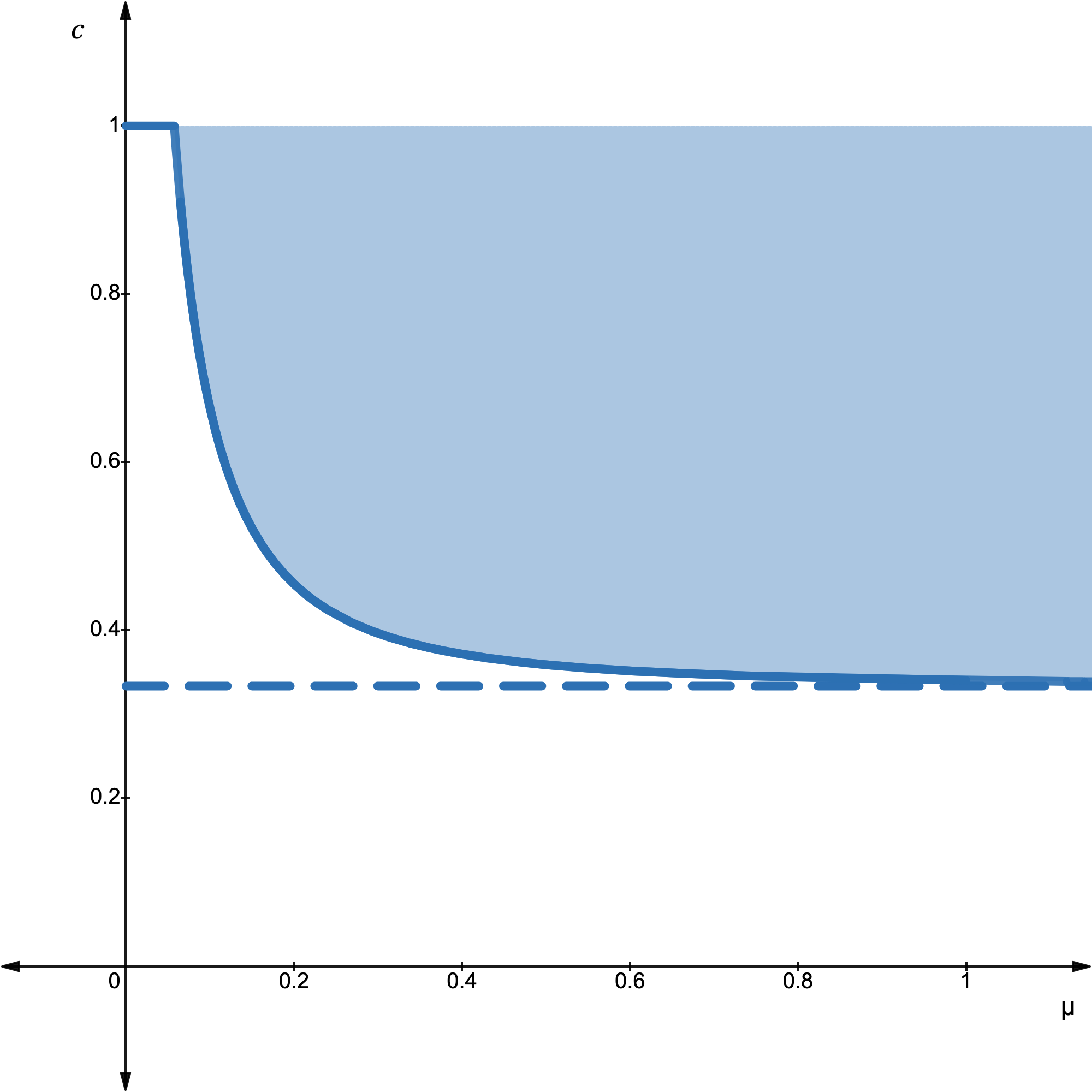}
	\hfill
	\includegraphics[width=0.45\hsize]{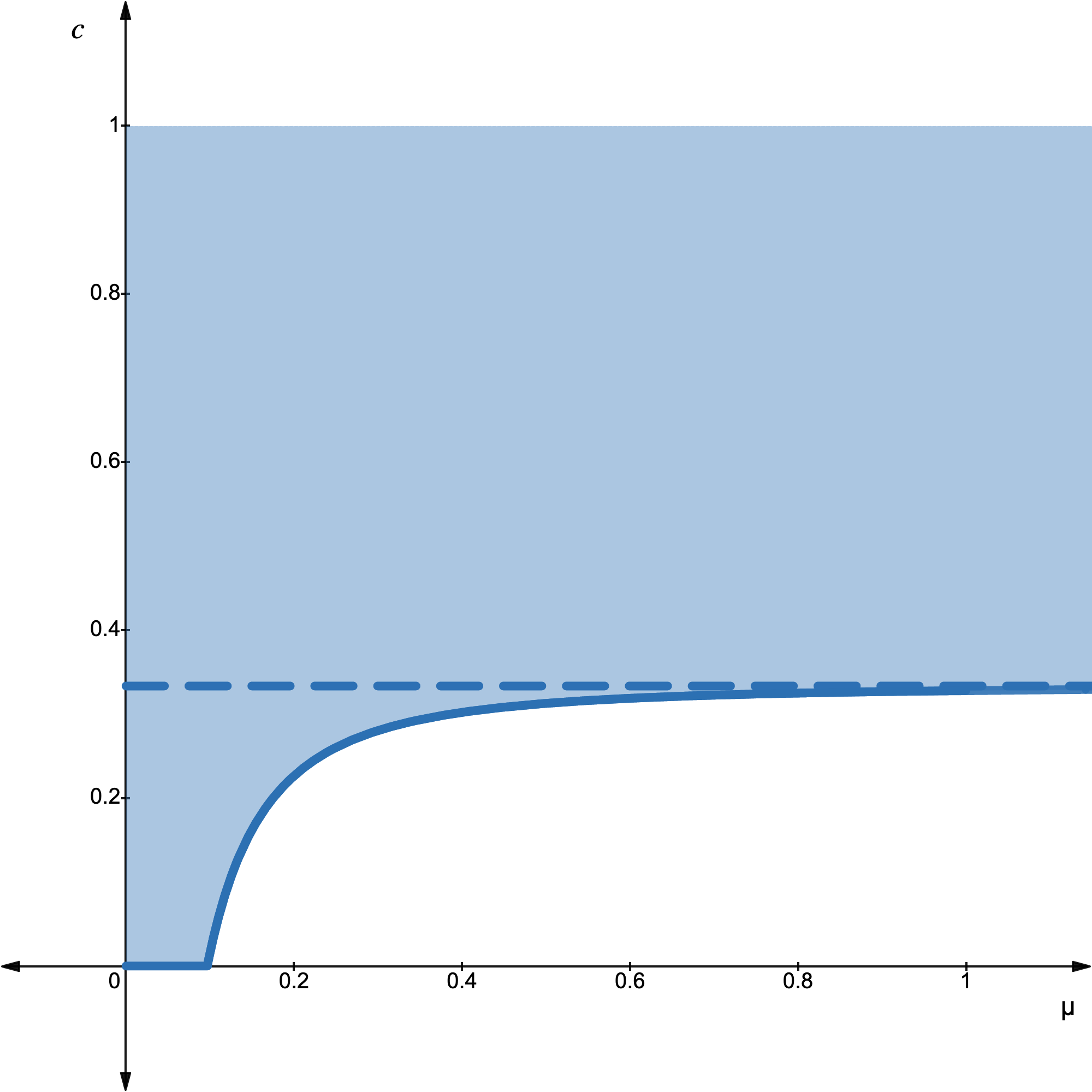}
	\caption{\small An example of Theorem \ref{thm:trialBalloon}: in the left panel $w_1 < 1/2$ and in the right panel $w_1 > 1/2$. The dashed line marks the limit value $c^*$, and the trial balloon is used when $(\mu, c)$ is in the blue shaded region.}
	\label{fig:trialBalloon}
\end{figure}
In the political context, where we expect priors about a presidential administration's policies to grow more precise later in that president's term, our results make an important testable prediction about the prevalence of trial balloons. If a trial balloon topic is relatively unimportant to the president's priorities, we would expect more trial balloons later in the term; if instead the trial balloon topic is central to the administration's agenda, we would expect early trial balloons followed by convergence to consensus-building proposals made without any discovery. Tracking administration priorities (e.g., through text analysis of press releases) and perceived support for various projects (e.g., through recurring poll questions) would allow us to test this prediction in the data.

\subsection{Disfavored Projects with Lower Variance}
\label{subsec:precise}
When $\sigma_1 < \sigma_2$, the disfavored project is no longer a trial balloon; it continues to have a lower risk of spoiling but now has a lower chance of improvement as well. This trade-off alters the choice of optimal discovery rule and carves out room for building consensus. Although the comparison of either project to no discovery is unchanged, comparing the two projects to each other requires careful consideration of the rates at which spoiling and improvement vanish. In the theorem below, we use the limit thresholds at which each project dominates no discovery to make the relevant comparisons.
\begin{theorem}[Optimal Single-Project Discovery]
	\label{thm:optSingle}
	Let $\mu_1 = -\mu < 0 \leq c \mu = \mu_2$ for $c \in [0, 1)$ and let $\sigma_1 < \sigma_2$ so that project 1 is disfavored but not a trial balloon.\\
	Define $c_i(\mu)$ as the cutoff above which project $i$ is preferred to no discovery. When $w_1 = 1/2$, the cutoffs are constant in $\mu$, with
	$$c_2(\mu) = c^{**} := \frac{\sigma_2}{\rho \sigma_1 + 2\sigma_2} > \frac{\rho \sigma_2}{\sigma_1 + 2 \rho \sigma_2} =: c^* = c_1(\mu).$$
	
	The value of $\mu$ divides optimal discovery into three regions: \footnote{The boundaries of these regions differ depending on the value of $w_1$. We include a precise characterization in Appendix A, and focus here on the behavior of the optimal discovery rule.}
	\begin{enumerate}
		\item Low $\mu$: no discovery if $c < \min \left\{c_1(\mu), c_2(\mu)\right\}$ and discovery of an unknown project otherwise.
		\item High $\mu$: the cutoff $c_1(\mu)$ lies strictly below $c_2(\mu)$. There is no discovery if $c < c_1(\mu)$, discovery of project 1 if $c \in (c, c_1(\mu))$, and discovery of an unknown project if $c > c_1(\mu)$.
		\item Limit $\mu$: no discovery if $c < c^*$, discovery of project 1 if $c \in (c^*, c^{**})$, and discovery of project 2 if $c > c^{**}$.
	\end{enumerate}
\end{theorem}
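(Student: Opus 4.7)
The plan is to leverage the cutoff analysis of Theorem \ref{thm:trialBalloon} applied separately to each project, and then to synthesize the pair of cutoff functions $c_1(\mu)$ and $c_2(\mu)$ into the three-region characterization. First, I would express the principal's expected utility from discovering each project in closed form using the bivariate normal conditional distribution. A realization $v_i = x$ produces an agent posterior mean of $\mu_j + \rho(\sigma_j/\sigma_i)(x - \mu_i)$ for the other project, so the principal chooses the weight-maximizing subset of $\{1, 2\}$ subject to the agent's approval constraints. This partitions the real line of $v_i$ into at most four intervals corresponding to full-bundle approval, single-project approval, or no approval, and the expected utility becomes a sum of Gaussian tail integrals expressible through $\Phi$ and $\phi$.

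Second, I would derive the cutoffs $c_i(\mu)$ at which discovering project $i$ first dominates no discovery, following the sign-analysis argument used for Theorem \ref{thm:trialBalloon} but now carried out for each project in turn. In the symmetric case $w_1 = 1/2$, the utility difference is invariant under simultaneous rescaling of $\mu$ and both variances (since all inequalities defining the integration regions scale linearly), so the cutoffs collapse to the constants $c^*$ and $c^{**}$. Direct algebra using $\sigma_1 < \sigma_2$ and $\rho \in (0, 1)$ then verifies the strict inequality $c^{**} > c^*$ by cross-multiplying the two expressions.

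Third, I would establish the three-region structure for general $w_1$. For large $\mu$, I would asymptotically expand the tail integrals via a Mills-ratio / Laplace-style argument: the improvement effect associated with project $i$ decays at a Gaussian rate governed by $\sigma_i$, so the higher-variance project 2 loses its improvement advantage faster than project 1 as $\mu \to \infty$. Combined with the differing spoiling rates, this yields $c_1(\mu) < c_2(\mu)$ eventually, and the limits of both cutoffs recover the symmetric-case constants $c^*$ and $c^{**}$. For small $\mu$, both cutoffs sit near a corner of $[0,1]$ (zero or one depending on the sign of $w_1 - 1/2$, as in Theorem \ref{thm:trialBalloon}), spoiling contributes negligibly, and the direct comparison between the two discovery rules reduces to the question of which project carries the higher weight on the favorable side.

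The main obstacle will be in this third step: verifying that the utility difference between discovering project 1 and discovering project 2 is single-crossing in $c$ for each fixed $\mu$ in the intermediate range, so that the optimal discovery rule is really a sequence of clean intervals rather than a more intricate partition, and pinning down the boundary values of $\mu$ that separate the three regions (the precise description of which is deferred to Appendix A). This reduces to a sign analysis of a linear combination of Gaussian densities and tail probabilities whose coefficients depend only on $\sigma_1, \sigma_2, \rho$, and $w_1$. I expect that showing the dominant sign is eventually determined by the variance ratio $\sigma_1/\sigma_2$ and the correlation $\rho$, via a monotonicity argument in $\mu$ on a suitably normalized difference, will yield both the single-crossing property and the identification of the threshold $\underline{\mu}$ separating the low- and high-$\mu$ regimes.
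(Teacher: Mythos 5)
Your first two steps track the paper's own route: closed-form utilities $\Pi^1,\Pi^2$ from the bivariate-normal conditional, the cutoff-versus-no-discovery analysis of Theorem \ref{thm:trialBalloon} applied to project 1 and (as a corollary) to project 2, and the constants $c^*$ and $c^{**}$ at $w_1=1/2$. The problems are in your third step. First, a directional error: with $\sigma_1<\sigma_2$ the improvement probability from discovering project $i$ is $\Phi\big(-(1-c)\mu/(\sigma_i+\rho\sigma_j)\big)$, and since $\sigma_2+\rho\sigma_1>\sigma_1+\rho\sigma_2$, it is the \emph{lower}-variance project 1 whose improvement effect decays faster as $\mu\to\infty$; project 2's improvement advantage persists and, for $c>c^{**}$, exponentially dominates both spoiling terms and project 1's improvement term. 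That dominance is precisely why the limit region assigns $c>c^{**}$ to project 2; your claim that project 2 ``loses its improvement advantage faster'' would, if carried through, invert this conclusion.

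Second, your self-identified ``main obstacle''\textemdash proving single-crossing in $c$ of the utility difference between discovering project 1 and project 2\textemdash is both unnecessary and, away from the limit, false for some parameters: the paper documents numerically (parameters of the right panel of Figure \ref{fig:optSingle}) that the region where project 1 is optimal can be disconnected, which is exactly why the theorem only claims ``discovery of an unknown project'' outside the interval $(c_1(\mu),c_2(\mu))$ in the low- and high-$\mu$ regions. The paper obtains the clean middle interval without any pairwise single-crossing argument, via a sandwich: for $c\in(c_1(\mu),c_2(\mu))$ in the high-$\mu$ region, project 1 strictly beats no discovery (Theorem \ref{thm:trialBalloon}) while project 2 loses to no discovery (the corollary), so project 1 is optimal; the limit statement then follows from $c_1(\mu)\to c^*$, $c_2(\mu)\to c^{**}$ plus the tail-dominance comparison above. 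You should replace the single-crossing program with this sandwich argument and correct the decay-rate comparison; otherwise the limit-region assignment for $c>c^{**}$ and the non-limit claims are not established.
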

\begin{figure}
	\centering
	\includegraphics[width=0.45\hsize]{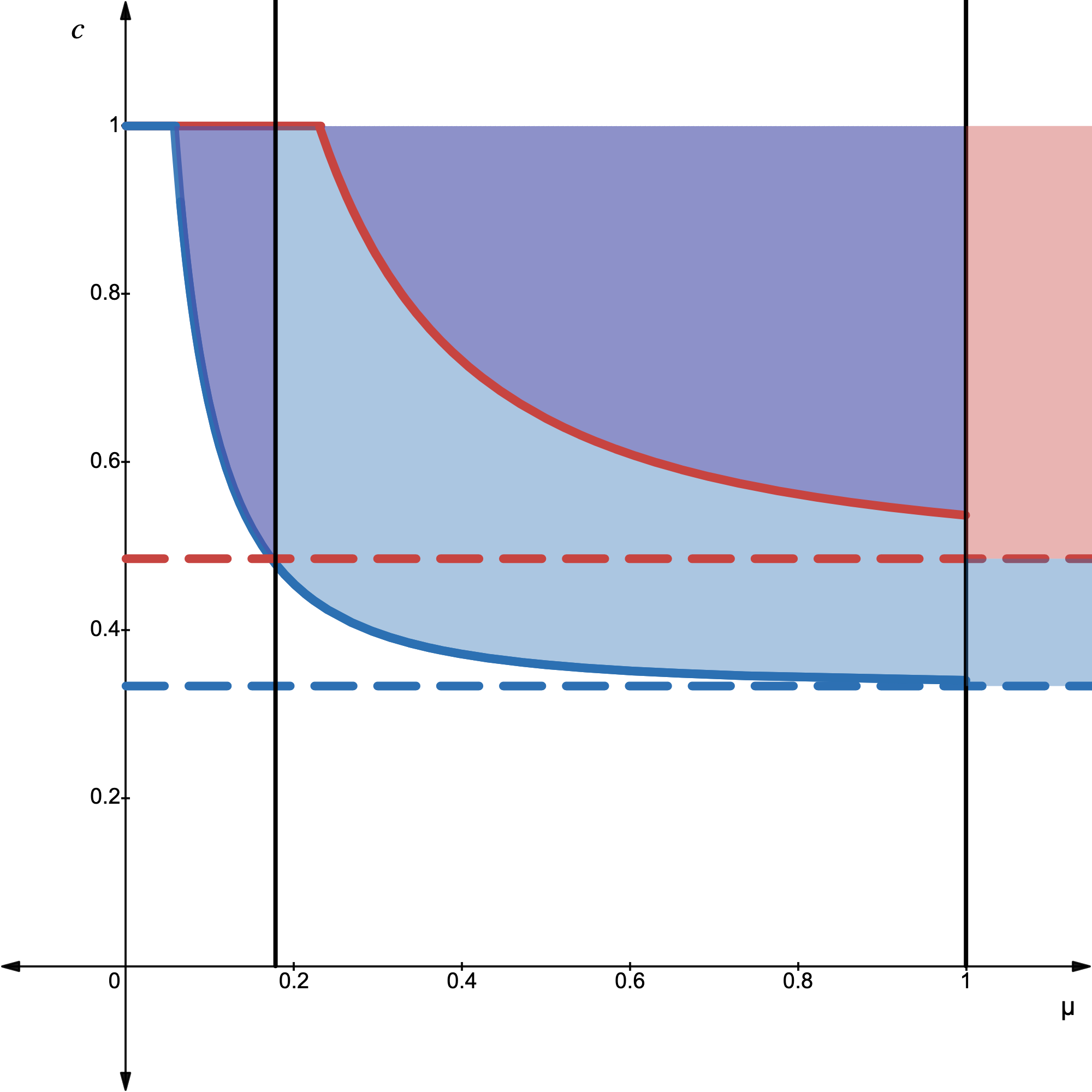}
	\hfill
	\includegraphics[width=0.45\hsize]{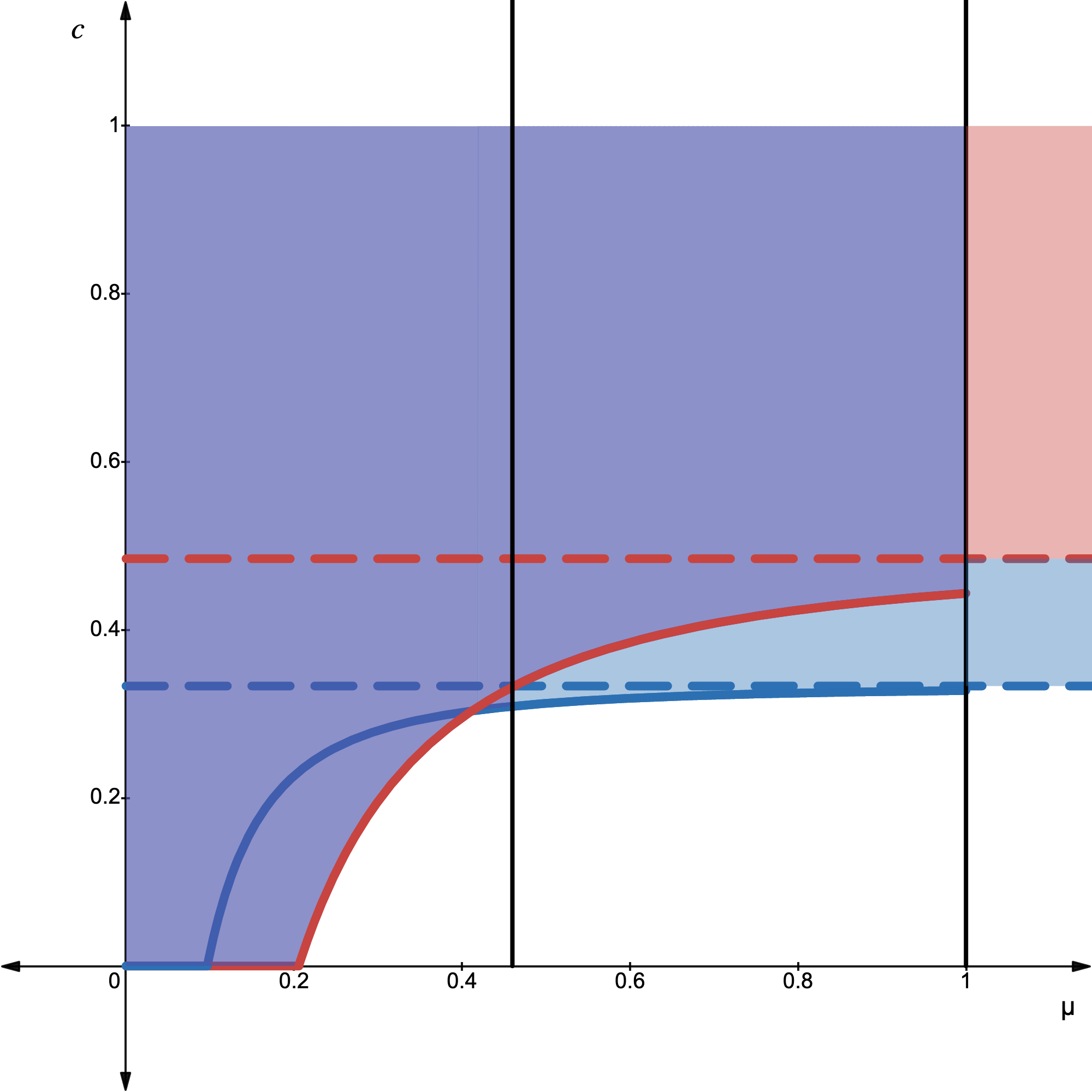}
	\caption{An example of Theorem \ref{thm:optSingle}: in the left panel $w_1 < 1/2$ and in the right panel $w_1 > 1/2$. In blue, project 1 is discovered; in red, project 2 is discovered; in purple, an unknown project is discovered. The black vertical lines separate the three cases, with $\mu > 1$ standing in for the limit case.}
	\label{fig:optSingle}
\end{figure}
In the limit, the interval $[0, 1)$ of possible values of $c$ can be partitioned into three sub-intervals such that the optimal discovery rule is constant in each sub-interval. However, this property does not hold outside of the limit. Depending on the model parameters, numerical simulations reveal that a variety of patterns for optimal discovery are possible. For example, under the parameters used to produce the right panel of Figure 2,\footnote{Specifically, we set $\sigma_1 = 1/20$, $\sigma_2 = 1/5$, $\rho = 1/4$, and $w_1 = 2/3$.} computing the principal's utility shows that the regions where it is optimal to discover project 1 are disconnected.

The clearest conclusion of Theorem \ref{thm:optSingle} is that, for large enough $\mu$, discovering the disfavored project is best for intermediate values of $c$. Furthermore, if $w_1 < 1/2$ then this intermediate range narrows as precision increases, so that discovery in general becomes more common but disfavored projects as a share of discovered projects become less common. If instead $w_1 > 1/2$, the opposite holds; discovery becomes less common, but more of the discovered projects are disfavored.\footnote{Lemma \ref{lm:compStat} in Appendix A proves this monotonicity property in the limit $\mu$ region. However, monotonicity does not hold away from the limit for all parameter values, though it does arise in a high percentage of simulations.} These predictions contrast with those of Theorem \ref{thm:trialBalloon}, and highlight the difference between the trial balloon setting and one in which the consensus project is more effective in improving the agent's beliefs. In the trial-balloon case, the changing range of optimality for the disfavored project depended only on whether spoiling or improvement vanished more quickly. Here, it depends also on the rate at which they vanish compared to the consensus project. When $w_1 < 1/2$, vanishing spoiling enlarges the region where discovering some project is optimal. Because spoiling vanishes more rapidly for the disfavored project, the optimality of discovering it for values ever closer to $c^*$ is overcome by the optimality of building consensus for values ever closer to $c^{**}$. When instead $w_1 > 1/2$ and vanishing improvement reduces the region where discovering some project is optimal, faster-vanishing improvement for the disfavored project leaves room for the optimal discovery region to grow as $c_2(\mu)$ more slowly approaches $c^{**}$.

To see the practical implications of the result, consider a politician that chooses not between a trial balloon and a popular low-variance policy, but between a polarized policy and an appealing generality. The polarized policy may be pork-barrel spending that is unpopular on average because only voters in a given district approve; given the definite nature of the spending, there is little chance public opinion will change. The appealing generality may be a promise to cut taxes; the average voter likes the idea, but depending on the details approval will likely change. Suppose further that the politician is self-interested and values a bill containing pork-barrel spending more than one with tax cuts. Then, as the public forms a more precise prior, the politician will be more likely to directly propose legislation without first making public commitments (less discovery overall). However, when the politician does speak in order to gauge the audience's response, they will be more likely to focus on the pork-barrel spending (relatively more discovery of the disfavored project). This prediction draws a sharp distinction between optimal behavior when the disfavored project's variance is lower rather than higher, and could be tested in a similar way to Theorem \ref{thm:trialBalloon}.

\section{Extensions}
\label{sec:ext}
The previous section captures the intuitive trade-off between spoiling and improvement that drives the optimality of trial balloons, but makes stark assumptions about the discovery rules available to the principal. In this section, we relax first the assumption of single-dimensional discovery and then that of conclusive evidence to show how our qualitative conclusions are preserved in the face of these changes.

\subsection{Discovering Both Projects}
\label{subsec:both}
We begin by considering the possibility of full discovery, where in addition to choosing between no discovery and discovery of a single project, the principal can choose to publicly reveal the value of both projects. For simplicity, we set $w_1 = w_2 = 1/2$, and omit the comparison of single-project discovery rules to no discovery:

\begin{theorem}[Multi-Project Discovery]
	\label{thm:both}
	Optimal discovery, allowing for discovering both projects, is as follows:
	\begin{enumerate}
		\item If $\mu_1 < 0$ and $\mu_2 < 0$, discover both projects.
		\item Else $\mu_1 = -\mu < 0 \leq c \mu = \mu_2$ for $c \in [0, 1)$. Assume $\sigma_1 \geq \sigma_2$ so that project 1 is a trial balloon. Then, there are values $0 < c_\ell < c_h < 1$ such that
		\begin{enumerate}
			\item for any $c \in [0, c_\ell)$ there exists $\underline{\mu}$ such that discovering project 1 is better than discovering both projects for all $\mu > \underline{\mu}$.
			\item for any $c \in [c_h, 1)$ there exists
			$\bar{\mu}$ such that discovering project 1 is worse than discovering both projects for all $\mu > \bar{\mu}$.
		\end{enumerate}
		There is also $c_\text{no} \in (0, 1)$ such that
		\begin{enumerate}[start=3]
			\item for any $c \in [0, c_\text{no})$ there exists a continuous function $\mu_\text{no}(c)$ such that full discovery is better than no discovery if and only if $\mu \in [0, \mu_\text{no}(c))$.
			\item for any $c > c_\text{no}$, full discovery is better than no discovery for all $\mu$.
		\end{enumerate}
	\end{enumerate}
\end{theorem}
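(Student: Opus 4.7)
My plan is to reduce each comparison to Gaussian tail probabilities with computable leading exponential rates in $\mu$, then sign those comparisons via the rate ordering. The key preliminary observation is that with $w_1 = w_2 = 1/2$, the principal's optimal proposal given posterior means $(\bar v_1, \bar v_2)$ earns $1$ whenever $\bar v_1 + \bar v_2 \geq 0$ and an additional $1/2$ whenever some $\bar v_i \geq 0$, giving the unified expression
\[
U \;=\; \tfrac{1}{2}\,\Pr(\bar v_1 \geq 0 \text{ or } \bar v_2 \geq 0) \;+\; \tfrac{1}{2}\,\Pr(\bar v_1 + \bar v_2 \geq 0).
\]
Under full discovery $(\bar v_1, \bar v_2) = (v_1, v_2)$; under project-$1$ discovery $\bar v_2 = M(v_1) := \mu_2 + \rho(\sigma_2/\sigma_1)(v_1 - \mu_1)$; under no discovery $(\bar v_1, \bar v_2) = (\mu_1, \mu_2)$ is deterministic, so $U_\text{no} = 1/2$ when $\mu_2 \geq 0 > \mu_1 + \mu_2$ and $U_\text{no} = 0$ when both prior means are negative.

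For part 1, I would apply Propositions~\ref{prop:bundle} and~\ref{prop:subs} termwise to this decomposition. The former strictly orders the bundle-acceptance probability $\Pr(\bar v_1 + \bar v_2 \geq 0)$ in favor of full discovery whenever $\mu_1 + \mu_2 < 0$, and the latter does the same for the disjunctive probability $\Pr(\bar v_1 \geq 0 \text{ or } \bar v_2 \geq 0)$ whenever both means are negative. Both hypotheses hold in part~1, so the convex combination preserves strict dominance and gives $U_F > U_1, U_2$; combined with $U_\text{no} = 0 < U_F$, full discovery is uniquely optimal.

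For parts 2(a)--(b), using $\{M(v_1) < 0\} \subset \{v_1 < 0\}$ to simplify the complements of the ``or'' events, the difference cleanly decomposes as
\[
U_F - U_1 = \tfrac{1}{2}\bigl[\Pr(v_1 + v_2 \geq 0) - \Pr(v_1 + M(v_1) \geq 0)\bigr] + \tfrac{1}{2}\bigl[\Pr(M(v_1) < 0) - \Pr(v_1 < 0,\, v_2 < 0)\bigr].
\]
Each of the four probabilities is a univariate Gaussian tail with explicit leading exponential rate in $\mu$: $(1-c)^2/(2 s_F^2)$ with $s_F^2 = \sigma_1^2 + 2\rho\sigma_1\sigma_2 + \sigma_2^2$ for $\Pr(v_1 + v_2 \geq 0)$; $(1-c)^2/(2 s_1^2)$ with $s_1^2 = (\sigma_1 + \rho\sigma_2)^2$ for $\Pr(v_1 + M \geq 0)$; $c^2/(2\sigma_2^2)$ for $\Pr(v_1 < 0, v_2 < 0)$ (dominated asymptotically by the $v_2 < 0$ constraint); and $c^2/(2\rho^2\sigma_2^2)$ for $\Pr(M < 0)$. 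Since $s_1^2 < s_F^2$ and $\rho^2 < 1$, the slowest-decaying term on the positive-contribution side is $\Pr(v_1 + v_2 \geq 0)$ and on the negative-contribution side is $\Pr(v_1 < 0, v_2 < 0)$, with rates equating at $c^* := \sigma_2/(\sigma_2 + s_F)$. For $c$ sufficiently below $c^*$ the negative term dominates, so $U_F < U_1$ for large $\mu$; for $c$ sufficiently above $c^*$ the positive term dominates, so $U_F > U_1$. Taking $c_\ell < c^* < c_h$ with a margin large enough to swamp the prefactor contributions near the exact crossover yields the claimed inequalities uniformly in each range.

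For parts 2(c)--(d), the same method applies to
\[
U_F - U_\text{no} = \tfrac{1}{2}\bigl[\Pr(v_1 + v_2 \geq 0) - \Pr(v_1 < 0,\, v_2 < 0)\bigr],
\]
which is strictly positive at $\mu = 0$ (where both means coincide at zero and the first probability equals $1/2$ while the second is strictly less). Comparing the same two leading rates as above, for $c > c_\text{no} := \sigma_2/(\sigma_2 + s_F)$ the positive term decays more slowly and the difference stays positive for all $\mu$, proving part~(d); for $c < c_\text{no}$ the negative term eventually dominates and the difference changes sign at some finite $\mu_\text{no}(c)$, proving part~(c), with continuity in $c$ by the implicit function theorem applied to the smooth Gaussian-CDF expression. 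The main obstacle throughout is establishing that each sign change is a single crossing, so that $\mu_\text{no}(c)$ is well-defined as a function and so that the $c_\ell, c_h$ margins suffice; I would handle this by showing that $\partial_\mu(U_F - U_\text{no})$ has constant sign in a neighborhood of the crossing via direct analysis of the bivariate normal density, which is also the nondegeneracy condition needed for the implicit function theorem to apply cleanly.
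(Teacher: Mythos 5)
Your treatment of part 1 and of parts 2(a)--(b) is essentially sound and close to the paper's. The decomposition $U=\tfrac12\,\mathbb{P}(\text{some posterior mean}\geq 0)+\tfrac12\,\mathbb{P}(\text{sum of posterior means}\geq 0)$ and the termwise appeal to Propositions \ref{prop:bundle} and \ref{prop:subs} for the both-means-negative case is exactly the paper's argument. For 2(a)--(b), your difference $U_F-U_1$ is the same object the paper studies in Lemma \ref{lm:bothVs1}; where the paper bounds the orthant probability and compares integrals of $\phi$ via endpoint estimates, obtaining only the conservative pair $c_\ell=\rho\sigma_2/(\rho\sigma_2+s_F)$ and $c_h=\sigma_2/(\sigma_1+(1+\rho)\sigma_2)$ with $s_F:=\sqrt{\sigma_1^2+2\rho\sigma_1\sigma_2+\sigma_2^2}$, you compare leading exponential rates directly and land the sharp crossover $\sigma_2/(\sigma_2+s_F)$, which indeed lies between the paper's two cutoffs. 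That is fine (arguably stronger), provided you justify that $\mathbb{P}(v_1<0,v_2<0)$ has rate exactly $c^2/(2\sigma_2^2)$ for all $c$ (e.g.\ via $\mathbb{P}(v_1<0)\mathbb{P}(v_2<0)\leq\mathbb{P}(v_1<0,v_2<0)\leq\mathbb{P}(v_2<0)$, using $\rho>0$; your sandwich "$v_2<0$ dominates" needs this because $\Phi(-\mu/\sigma_1)$ need not be negligible relative to $\Phi(-c\mu/\sigma_2)$ when $c>\sigma_2/\sigma_1$), and you handle prefactors, which is routine since the statements are pointwise in $c$.

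The genuine gap is in parts 2(c)--(d). These are global-in-$\mu$ claims: (d) asserts $U_F>U_{\text{no}}$ for \emph{every} $\mu$, and (c) asserts the set where full discovery wins is exactly an initial interval $[0,\mu_{\text{no}}(c))$. Your rate comparison for $U_F-U_{\text{no}}=\tfrac12[\mathbb{P}(v_1+v_2\geq0)-\mathbb{P}(v_1<0,v_2<0)]$ pins down the sign only at $\mu=0$ and for $\mu$ large; nothing in it excludes an interior dip below zero for $c$ above your threshold (which would break (d)) or multiple crossings below it (which would break the "iff" structure and the well-definedness of $\mu_{\text{no}}(c)$). You name this single-crossing issue as "the main obstacle" and propose to settle it by signing $\partial_\mu(U_F-U_{\text{no}})$ near crossings, but that deferred step is precisely the substantive content of this part of the theorem, and it does not follow from the tail asymptotics you set up. The paper supplies it with a separate argument (Lemma \ref{lm:bothVsNo}): a geometric analysis of the elliptical contour sets of the bivariate normal showing that the full-discovery payoff exceeds $1/2$ along a critical locus and declines monotonically in $\mu$ below it, which is what delivers the interval structure and the dichotomy at $c_{\text{no}}$. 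Relatedly, your closed-form identification $c_{\text{no}}=\sigma_2/(\sigma_2+s_F)$ presupposes that global single-crossing property for every $c$; the paper deliberately leaves $c_{\text{no}}$ without a closed form and notes numerically that it need not coincide with the crossover governing the project-1-versus-full comparison, so either prove the monotonicity/single-crossing input or drop that identification.
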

We use numerical simulations similar to those in Proposition \ref{prop:disfavor} to verify that the gap between $c_\ell$ and $c_h$ is closed at some intermediate $c_\text{mid}$, below which case (2a) occurs and above which case (2b) occurs. As with the non-limit cases of Theorem \ref{thm:optSingle}, we can find counterexamples for finite $\mu$ where the dominance regions for each discovery rule are not connected. Finally, numerical values of $c_\text{no}$ lie both above and below $c_\text{mid}$ depending on the other parameters.

Case (1) here parallels Proposition \ref{prop:disfavor} in applying the results of a complements-and-substitutes decomposition of the principal's utility. However, rather than facing a potential trade-off between the two terms which leads to a ``compromise" cut-off, in this case the greater variance coming from the ability to discover both projects dominates any single-project discovery rule. Combining case (2) with Theorem \ref{thm:trialBalloon} leads to a similar intermediate region of dominance for the trial balloon as the disfavored project in Theorem  \ref{thm:optSingle}. We can also obtain the same limiting comparative statics with respect to increased precision, with increased precision shrinking the dominance region when $w_1 < 1/2$ and expanding it when $w_1 > 1/2$ even as the behavior of the overall discovery region is opposite. Although the limitation to single-project discovery is often natural, this result reinforces the conclusions of Theorem \ref{thm:optSingle} by making clear that depend only on the magnitude and convergence rate of the spoiling and improvement probabilities as functions of some arbitrary discovery rule.

\subsection{Sequential Discovery}
\label{subsec:seq}
Once the principal is allowed the possibility of discovering both projects at once, it is natural to consider sequential discovery in comparison to the one-shot discovery rule described above. This setting is challenging to work with, but we can establish several basic facts. First, the principal never discovers both projects at once, and her utility is trivially higher than in the one-shot case. The results of Theorem \ref{thm:trialBalloon}, appropriately modified to use the parameters of the posterior belief after the first discovery step, determine whether a second project is discovered. Finally, the regions where a project will be discovered are larger than in either of Theorems \ref{thm:trialBalloon} and \ref{thm:both}, because the ability to discover a second project offers the principal a chance at  ``redemption" by discovering a second project afterward. 

Two obstacles remain to fully characterize the sequential model: a tight characterization of when the first discovery step is undertaken, and a description of the order of optimal discovery. We leave these for future work.

\subsection{Noisy Discovery}
\label{subsec:noisy}
The conclusive evidence setting may be attractive for some examples, such as an engineer choosing which project to prototype before presenting to management, but it is also a stark restriction on the principal's ability to discover information. In this section we allow the principal to discover a noisy signal of a project's value rather than discovering the value with certainty.

\begin{prop}[Discovery with Exogenous Noise]
	\label{prop:noise}
	Let discovery of issue $i$ generate a noisy signal $s_i$ of its value with distribution $s_i \sim N(v_i, \tau_i^2)$. Then the results of Section \ref{sec:bvNorm} apply unchanged after substituting the distribution 
	$$N\bigg(\mu_i, \frac{\sigma_i^4}{\tau_i^2 + \sigma_i^2}\bigg)$$
	for the project $i$ which the principal chooses to discover and 
	$$N\bigg(\mu_j + \rho \frac{\sigma_j}{\sqrt{\sigma_i^2 + \tau_i^2}} \, (s_i - \mu_i), (1 - \rho)^2 \sigma_j^2\bigg)$$
	for the posterior belief about project $j$, conditional on discovery of project $i$.
\end{prop}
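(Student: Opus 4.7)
The plan is to show that the noisy-discovery game reduces to the conclusive-discovery setup of Section \ref{sec:bvNorm} after a direct reparameterization of the induced distributions, so that every result in that section applies verbatim once the stated substitutions are made. The key observation is that the agent's acceptance rule $\sum_{i \in S} \mathbb{E}_G[v_i] \geq 0$ is linear in his posterior means, so the principal's ex-ante evaluation of any discovery rule depends only on the joint ex-ante distribution of those posterior means. Showing that this induced distribution is bivariate Gaussian with the parameters stated in the proposition then allows me to invoke each prior result with no further work.

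First, I would write $s_i = v_i + \epsilon_i$ with $\epsilon_i \sim N(0, \tau_i^2)$ independent of $(v_1, v_2)$, so that $(v_1, v_2, s_i)$ is jointly Gaussian with $s_i \sim N(\mu_i, \sigma_i^2 + \tau_i^2)$, $\mathrm{Cov}(v_i, s_i) = \sigma_i^2$, and $\mathrm{Cov}(v_j, s_i) = \rho \sigma_i \sigma_j$. Standard conditional-Gaussian formulas then give $\mathbb{E}[v_i \mid s_i]$ and $\mathbb{E}[v_j \mid s_i]$ as affine functions of $s_i$, together with the corresponding posterior variances. Viewing $s_i$ as the ex-ante random variable, the induced distribution of $\mathbb{E}[v_i \mid s_i]$ is Gaussian with mean $\mu_i$ and variance $\sigma_i^4/(\sigma_i^2 + \tau_i^2)$, matching the first claim, while the agent's posterior about $v_j$ given $s_i$ is Gaussian with mean linear in $s_i$ and the reduced constant variance given by the conditional-variance formula, matching the second claim.

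Next, I would translate the principal's problem into the Section \ref{sec:bvNorm} framework. In the conclusive setting the problem depends only on two objects: the ex-ante distribution of the discovered value $v_i \sim N(\mu_i, \sigma_i^2)$ and the agent's posterior $\mathbb{E}[v_j \mid v_i] = \mu_j + \rho(\sigma_j/\sigma_i)(v_i - \mu_i)$. In the noisy setting the analogues are $\mathbb{E}[v_i \mid s_i]$ and $\mathbb{E}[v_j \mid s_i]$. Using the linearity of the updating rules, I can re-express $\mathbb{E}[v_j \mid s_i]$ as an affine function of $\mathbb{E}[v_i \mid s_i]$; this has exactly the same functional form as in the conclusive case, with $\sigma_i^2$ replaced by its noisy counterpart $\sigma_i^4/(\sigma_i^2 + \tau_i^2)$ and the posterior variance for $v_j$ adjusted accordingly. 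Because the agent's decision rule and the principal's objective depend only on these two effective distributions, each benchmark proposition and main theorem of Section \ref{sec:bvNorm} applies on the substituted parameters.

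The argument is mainly a computation with bivariate Gaussians, so the real work is bookkeeping. The one subtle point worth checking carefully is that the \emph{joint} distribution of $\bigl(\mathbb{E}[v_i \mid s_i], \mathbb{E}[v_j \mid s_i]\bigr)$, not just its marginals, matches the joint distribution of $\bigl(v_i, \mathbb{E}[v_j \mid v_i]\bigr)$ from the conclusive problem after substitution. This is immediate because both pairs are deterministic affine functions of a single scalar Gaussian, so their joint law is pinned down by the linear coefficients already derived. Once this affine equivalence is established, each result of Section \ref{sec:bvNorm} transfers mechanically.
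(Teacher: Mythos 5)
Your approach matches the paper's own proof: both compute the ex-ante distribution of the posterior mean $\mathbb{E}[v_i \mid s_i] \sim N\big(\mu_i, \sigma_i^4/(\sigma_i^2+\tau_i^2)\big)$ and the agent's posterior about $v_j$ given $s_i$ via standard Gaussian conditioning, then observe that the Section \ref{sec:bvNorm} analysis depends only on these induced (affine-in-$s_i$, hence jointly Gaussian) objects, so all results carry over by substitution. Your explicit check that the joint law of $\big(\mathbb{E}[v_i \mid s_i], \mathbb{E}[v_j \mid s_i]\big)$, not merely the marginals, has the same affine structure as in the conclusive case is a welcome bit of extra care, but it is the same argument.
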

A noisy signal weakens both the spoiling and improvement effects for project $i$ by reducing the amount by which discovery shifts the agent's beliefs. Thus the effect on optimal discovery is uncertain. However, the presence of exogenous noise does not alter the principal's fundamental trade-off; she simply accounts for the signal's noise and considers a problem in the style of Section \ref{sec:bvNorm} with an appropriately modified prior belief. In general, making the signal of a project's value noisy both makes the principal less likely to discover it and weakly decreases her utility from the optimal discovery rule. This result suggests that future work may benefit from considering the possibility for the principal to reduce the signal's variance at some effort cost.

\subsection{Discovery with Many Projects}
\label{subsec:Nproj}
Thus far we have abstracted away from the motivating issue of a complex proposal by considering a case with only two correlated projects, rather than some large but finite number of projects $N$. This choice is made for tractability, and we can view the two projects in our model as standing in for the types of projects a principal might have access to in a richer setting. Nevertheless, we can still draw some limited conclusions with an explicit model of $N > 2$ projects. 

Let the marginal distribution of the agent's value for project $i \in \left\{1, 2,..., N\right\}$ be $v_i \sim N(\mu_i, \sigma_i^2)$, and assume that the joint distribution of values is an $N$-variate normal with equal correlation $\rho$ between any pair of projects. Then Propositions \ref{prop:bundle} and \ref{prop:subs}, describing perfect complements and perfect substitutes, extend without modification. Specifically, if the principal maximizes the probability of having the grand bundle approved, she always discovers the project with highest variance. If instead she maximizes the probability of having at least one project approved, she discovers any project $i$ satisfying $\sigma_i/\mu_i = \min_{k \in \left\{1,...,N\right\}} \sigma_k/\mu_k$. In the latter case, she also always prefers discovering more projects to discovering fewer.

Obtaining parallel results to Theorems \ref{thm:trialBalloon} and \ref{thm:optSingle} is more challenging; rather than comparing spoiling and improvement effects across two projects, we now have to make the comparison across $N$ projects. These pairwise comparisons are further complicated by the ability to spoil or improve multiple projects at once, rather than only spoiling the single favored project or improving the single disfavored project. To accommodate these issues and preserve our focus on discovery of trial balloons, the following result allows for only one negative-mean project:
\begin{prop}[Optimal Single-Project Discovery with Many Projects]
	\label{prop:Nballoon}$\text{} $\\
	Let $\mu_1 = -\mu < 0 \leq c_j \mu$ with $c_j \geq 0$ for all projects $j \in \left\{2,...,N\right\}$, and let $\sum_{j \in \left\{2,...,N\right\}} c_j \leq 1$. Then the following analogues of previous theorems hold:
	\begin{itemize}
		\item (Analogue of Theorem \ref{thm:trialBalloon}, part 1): For any project $k$, choosing one $c_j$ and fixing the others, there is a continuous function $\underline{c}_j^k(\mu, w_1)$ such that discovering that project is better than no discovery if and only if $c_j > \underline{c}_j^k(\mu, w_1)$. This function is decreasing in $w_1$ for any $\mu$.
		\item (Analogue of Theorem \ref{thm:trialBalloon}, part 2): If $\sigma_1 = \max_{k \in N} \sigma_k$, then discovering project 1 is always better than discovering any other project. 
		\item (Analogue of Theorem \ref{thm:optSingle}): If $\sigma_i < \max_{k \in N} \sigma_k$, then discovering project 1 has a weaker spoiling effect than discovering project $k$, but need not have a stronger improvement effect.
	\end{itemize}
\end{prop}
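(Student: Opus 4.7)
The plan is to treat the three parts separately, working from the fact that after discovering project $k$, the posterior on $v_j$ (for $j \neq k$) is normal with mean $\mu_j + \rho (\sigma_j/\sigma_k)(v_k - \mu_k)$ and variance $(1-\rho^2)\sigma_j^2$, and the principal's optimal proposal is the highest-weight subset $S$ with non-negative posterior sum. Under $\sum_{j \geq 2} c_j \leq 1$, the no-discovery payoff is generically $U_\emptyset = 1 - w_1$ (optimal proposal $\{2,\ldots,N\}$), and the discovery utility admits the improvement-minus-spoiling decomposition $U_k - U_\emptyset = w_1 \Pr(1 \in S^*_k) - \sum_{j \geq 2} w_j \Pr(j \notin S^*_k)$, where $S^*_k$ denotes the optimal post-discovery subset.

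For part 1, I would first show that $U_k$ is smooth and strictly increasing in $c_j$: raising $\mu_j = c_j \mu$ weakly expands the set of realizations at which any subset containing $j$ is feasible, and strictly on a positive-measure set. Since $U_\emptyset = 1 - w_1$ is constant on the relevant range of $c_j$, the function $\Delta_k(c_j) := U_k(c_j) - U_\emptyset$ is strictly monotonic, yielding a unique cutoff $\underline{c}_j^k(\mu, w_1)$. Continuity in $\mu$ and $w_1$ follows from an implicit function theorem argument applied to $\Delta_k = 0$. For monotonicity in $w_1$, I would compensate any increase in $w_1$ by a decrease in some other $w_{j'}$; the decomposition gives $\partial \Delta_k/\partial w_1 = \Pr(1 \in S^*_k) + \Pr(j' \notin S^*_k) > 0$, so $\underline{c}_j^k$ is strictly decreasing in $w_1$.

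For parts 2 and 3, I would couple across discovery rules via the standardized shock $Z := (v_k - \mu_k)/\sigma_k$, which is standard normal independent of $k$. For fixed $Z = z$, the posterior mean of each $j \neq k$ equals $\mu_j + \rho \sigma_j z$, identical across $k$; only the directly revealed term $\mu_k + \sigma_k z$ differs. For part 2 with $\sigma_1 = \max_k \sigma_k$, I would show pointwise in $z$ that discovery of project 1 weakly dominates discovery of any $k \neq 1$. In the improvement region, the direct term $-\mu + \sigma_1 z$ exceeds the corresponding posterior shift $\rho \sigma_1 z$ obtained under discovery of $k \neq 1$ whenever $\sigma_1 z(1-\rho) \geq \mu$, so the grand bundle becomes attainable at smaller $z$. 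In the spoiling region, under discovery of 1 the default $\{2,\ldots,N\}$ loses no direct weight (project 1 was never there), while under discovery of $k \geq 2$ a sufficiently negative $z$ can push $v_k < 0$, forcing $k$ out of the default at direct cost $w_k$. Pointwise dominance then integrates to dominance in expectation.

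For part 3, the same coupling shows weaker spoiling under discovery of 1, since the direct-weight loss mechanism is absent. To show improvement need not be stronger, I would exhibit a three-project example with $\sigma_1$ small and $\sigma_2 \gg \sigma_1$, where the improvement lever from discovery of 2 benefits from the full $\sigma_2 z$ in the posterior sum of the grand bundle, producing strictly higher improvement probability than the modest $\rho \sigma_1 z$ shift that discovery of 1 induces in project 2's posterior. The main obstacle will be the pointwise coupling in part 2, where $S^*_k(z)$ differs across $k$ and the combinatorial explosion in $N$ makes a direct subset-by-subset comparison intricate. My approach would be to observe that any subset $S$ with posterior sum $\geq 0$ under discovery of $k \neq 1$ also has posterior sum $\geq 0$ under discovery of 1 at the same $z$ (using $\sigma_1 \geq \sigma_k$ and the sign structure of the posterior shifts), so the max-weight feasible subset under discovery of 1 dominates by envelope reasoning.
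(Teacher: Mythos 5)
Your overall architecture is sound and, for parts 1 and 3, essentially parallels the paper: the paper also gets the cutoff $\underline{c}_j^k$ from monotonicity of the discovery-vs-no-discovery gap in $c_j$ (its closed-form $\Phi$ expressions play the role of your pointwise feasible-set expansion), and it too establishes only weaker spoiling plus the impossibility of signing the improvement comparison when $\sigma_1$ is not maximal; your compensated-weights envelope computation for monotonicity in $w_1$ is if anything more explicit than the paper's. For part 2, however, you take a genuinely different route: the paper compares the closed-form normal-cdf terms directly, using optimality of the bundle $S_1^1$ and $\sigma_1 \geq \sigma_k$ to rank the improvement arguments and $\rho < 1$ to rank the spoiling terms, whereas you couple the two discovery rules on the standardized shock $Z$ and argue pointwise dominance. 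That coupling is attractive and can be made to work, but as written it has a genuine gap.

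The gap is the final feasibility-inclusion claim: it is \emph{not} true, as a matter of sums alone, that any bundle approvable under discovery of $k$ has a weakly larger posterior sum under discovery of 1 at the same $z$. The two posterior-mean vectors differ in exactly two coordinates, and the differences have opposite signs: for $z > 0$ a bundle containing $k$ but not 1 has a \emph{strictly smaller} sum under discovery of 1 (it loses $(1-\rho)\sigma_k z$), and for $z < 0$ a bundle containing project 1 has a strictly smaller sum under discovery of 1. So ``$\sigma_1 \geq \sigma_k$ and the sign structure'' does not deliver the inclusion directly; your intermediate statements (the threshold condition $\sigma_1 z (1-\rho) \geq \mu$, and the claim that the default $\{2,\ldots,N\}$ ``loses no direct weight'' under discovery of 1, when in fact all undiscovered posteriors are equally depressed for $z<0$) reflect this confusion. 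The inclusion is nonetheless true under the proposition's hypotheses, but proving it needs a case analysis that uses $c_j \geq 0$ and $\sum_{j\geq2} c_j \leq 1$: (i) for $z \geq 0$, every coordinate other than 1 is nonnegative under discovery of 1, so any bundle excluding 1 is automatically approvable even though its sum fell, while bundles containing 1 gain at least $(1-\rho)(\sigma_1-\sigma_k)z \geq 0$; (ii) for $z < 0$, no bundle containing project 1 is approvable under either rule, because the positive parts of the other posteriors sum to at most $\sum_{j\geq2} c_j \mu \leq \mu < \mu - \rho\sigma_1 z$, and among bundles excluding 1 only coordinate $k$ differs and it is larger under discovery of 1, so monotonicity of the max-weight approvable value applies. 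With that case analysis supplied, your pointwise argument integrates to the paper's conclusion; without it, the key lemma is unsupported.
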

Theorem \ref{thm:trialBalloon} extends almost unchanged in this setting; the main difference is that we can no longer analytically sign the behavior of the cutoff $\underline{c}_j^k$ with respect to increasing prior precision, though we can still see that the cutoff decreases as more weight is placed on the trial balloon. The expressions for the limit values of each $\underline{c}_j^k$, analogous to $c^*$ or $c^{**}$ in Theorem \ref{thm:optSingle}, cannot be written in fully closed form. Thus the conclusions of Theorem \ref{thm:optSingle}, which rely on the ordering of those cutoffs, are almost entirely lost; in fact we cannot even be sure that a lower-variance disfavored project does not maintain a stronger improvement effect. These limitations emphasize the complexity of discovery in high-dimensional environments and suggest that a low-dimensional heuristic, such as the spoiling versus improvement trade-off we identify, may serve as a reasonable guide for real-world decisions.

\section{Conclusion}
\label{sec:conc}
Real-world policy proposals bundle many correlated projects, and numerous constraints\textemdash time, money spent, and attention, to name a few\textemdash may restrict a proposer's ability to communicate about the proposal as a whole. To study the choice of which project to emphasize, we present a stylized model with a two-project proposal space where a principal decides whether to publicly discover information about an agent's unknown value in order to gain the agent's approval for their final proposal. In our baseline setting, where the principal can only discover one of the two project values and places additive weight on each project, we characterized the optimality of trial balloons\textemdash ex-ante disfavored projects with higher variance than the favored alternative. The optimal discovery rule is driven by a trade-off between spoiling any ex-ante favored projects and improving the agent's beliefs to get more projects approved. We highlight the interplay between the relative weights, magnitudes, and variances of the two projects, which lead to subtle results. Trial balloons, and discovery of any project in general, become more common with increasing prior precision if and only if the weight on a disfavored project is lower than that on a favored one. However, the share of this region taken up by a lower-variance disfavored project has the opposite sign. If it has low importance, then the disfavored project takes up a smaller share of the growing discovery region; with high importance, it takes up a larger share of the shrinking discovery region. Even when the principal has access to richer discovery rules, such as full or noisy discovery, disfavored projects benefit from the ability to explore larger bundles without foreclosing the possibility of a safe proposal if that project fails. Our setting captures only some of the many factors driving the adoption of trial balloons in political and organizational settings, and there remains work to do in tailoring the broad strokes of this framework to particular settings and the discovery rules that fit them.

\bibliographystyle{ecta}
\bibliography{balloons_v2.bib}

\newpage

\section*{Appendix A: Proofs}
\begin{proof}[Proof of Proposition \ref{prop:fullCor}] $\text{} $
	
	The equivalence of discovering one project and discovering both follows automatically from perfect correlation. We thus refer to discovering both projects throughout, and compare full discover to no discovery only.
	
	If $\mu_1 < 0$ and $\mu_2 < 0$, the payoff without discovery is 0 for sure, whereas discovery may give a positive payoff (if the realized projects have sufficiently high value).
	
	If $\mu_1 + \mu_2 \geq 0$, then the payoff without discovery is 1 for sure, whereas discovery may give a lower payoff (if the realized projects have sufficiently low value).
	
	If $\mu_i = -\mu < 0 \leq c \mu = \mu_j$ for $c \in [0, 1)$, then the payoff without discovery is $w_j$ for sure, as only project $j$ is approved. Consider discovering project $i$; then if $c = 1/(2 - \sigma_1/\sigma_2)$, by symmetry of the marginal distribution of $v_i$ there is an equal probability that 
	$$v_i < \mu_i - (\sigma_i/\sigma_j) \, \mu_j \Rightarrow v_j < \mu_j - \mu_j = 0,$$
	so $\mu_i$ is no longer approved, and  
	$$v_j > \mu_i + (\sigma_i/\sigma_j) \, \mu_j \Rightarrow v_i + v_j = (\mu_i + \mu_j) + (1 + \sigma_j/\sigma_i) \, \mu_i > 0.$$ 
	If $w_j = 1/2$, the expected payoff is therefore the same as no discovery.
	
	If instead $w_j < 1/2$, then the expected payoff from discovering project 1 is greater than from no discovery, because the increase in utility from getting 1 instead of $w_j$ exceeds the loss from getting 0 instead of $w_j$. As $c \rightarrow 0$ with $\mu$ held fixed, the probability that $v_j < 0$ approaches $1/2$, whereas the probability that $v_i + v_j > 0$ approaches $\mathbb{P}(v_i > 0) = \Phi(\mu/\sigma_1) := p < 1/2$. We therefore have an expected utility of $(1 - 1/2 - p) \cdot w_j + p \cdot 1$ in the limit. When $w_j > 2p/(2p+1)$, this limit value is less than $w_j$, and so by continuity there is a cutoff $\underline{c} \in (0, 1/(2 - \sigma_1/\sigma_2))$ such that discovering project 1 is best if and only if $c > \underline{c}$. When $w_j < 2p/(2p + 1)$, discovering project 1 is best for all $c$.
	
	Finally, if $w_j > 1/2$, then the expected payoff from discovering project 1 is less than from no discovery. As $c \rightarrow 1$, the probability that $v_j < 0$ approaches $\Phi(-\mu/\sigma_2) := q < 1/2$ whereas the probability that $v_i + v_j > 0$ approaches $1/2$. We therefore have an expected utility of $(1 - 1/2 - q) w_j + 1/2 \cdot 1$ in the limit. When $w_j < 1/(2q + 1)$, this limit value exceeds $w_j$, and so by continuity there is a cutoff $\underline{c} \in (1/(2 - \sigma_1/\sigma_2), 1)$ such that discovering project 1 is best if and only if $c > \underline{c}$. When $w_j > 1/(2q + 1)$, discovering nothing is best for all $c$.
\end{proof}

\begin{proof}[Proof of Proposition \ref{prop:bundle}] $\text{} $
	
	Throughout, let $X$ be a $N(0, 1)$ random variable. Note that upon discovering project $i$ and obtaining realization $v_i$, the agent's posterior belief about $v_j$ is
	\begin{equation*}
		v_j | v_i \sim N\bigg(\mu_j + \rho \,  \frac{\sigma_j}{\sigma_i} \, (v_i - \mu_i), (1 - \rho^2) \, \sigma_j^2\bigg).
	\end{equation*}
	Thus the probability of obtaining the grand bundle is given by
	\begin{equation*}
		\begin{split}
			\mathbb{P}(v_i + \mathbb{E}(v_j | v_i) \geq 0) = & \mathbb{P}\bigg(v_i + \mu_j + \rho \,  \frac{\sigma_j}{\sigma_i} \, (v_i - \mu_i) \geq 0\bigg)
			\\& = \mathbb{P}\bigg(v_i \geq \frac{(\rho \sigma_j/\sigma_i) \, \mu_i - \mu_j}{1 + (\rho \sigma_j/\sigma_i)}\bigg)
			\\& = \mathbb{P}\bigg(X \geq \bigg(\frac{(\rho \sigma_j/\sigma_i) \, \mu_i - \mu_j}{1 + (\rho \sigma_j/\sigma_i)} - \mu_i\bigg)/\sigma_i\bigg)
			\\& = \mathbb{P}\bigg(X \geq -\frac{\mu_i + \mu_j}{\sigma_i + \rho \sigma_j}\bigg).
		\end{split}
	\end{equation*}
	When discovering both projects, the probability of obtaining the grand bundle is
	\begin{equation*}
		\begin{split}
			\mathbb{P}(v_i + v_j \geq 0) = \mathbb{P}\bigg(X \geq -\frac{\mu_i + \mu_j}{\sqrt{\sigma_i^2 + \sigma_j^2 + 2 \rho \sigma_i \sigma_j}}\bigg),
		\end{split}
	\end{equation*}
	where we note that the sum of two correlated normals is normal with mean and variance as given above (i.e., larger variance but same mean as if they were uncorrelated).
	
	It is clear that whenever $\sigma_i > \sigma_j$, discovering only project $i$ is better than discovering only project $j$, and vice-versa. This result follows from noting that the only way of attaining the grand bundle is to draw an above-mean realization from the appropriate marginal, and so the higher-variance marginal distribution is the best choice.
	
	Setting $\sigma_i = 1$ (which is without loss because we can then scale the other parameters accordingly), we can also obtain a condition for discovering both projects to be better than discovering only $i$:
	\begin{equation*}
		\begin{split}
			-\frac{\mu_i + \mu_j}{1 + \rho \sigma_j} & >  - \frac{\mu_i + \mu_j}{\sqrt{1 + \sigma_j^2 + 2\rho \sigma_j}}
			\\ (1 - \rho^2) \, \sigma_j^2 & > 0.
		\end{split}
	\end{equation*}
	This condition holds for all valid $\rho$, so the absence of restrictions on $\sigma_j$, $\mu_i$, or $\mu_j$ means that discovering both is in fact better than discovering either individual project.

\end{proof}

\begin{proof}[Proof of Proposition \ref{prop:subs}] $\text{} $
	
	We first show that, when restricted to only discovering a single project, discovering the project $i$ satisfying $\sigma_i/\mu_i < \sigma_j/\mu_j$ is better than discovering the other project. Using the expressions from the proof of Proposition \ref{prop:bundle}, we can see that upon discovering project 1, the probability of having a project approved is
	\begin{equation*}
		\max \left\{\mathbb{P}\bigg(X \geq -\frac{\mu_1}{\sigma_1}\bigg), \mathbb{P}\bigg(X \geq -\frac{\mu_2}{\rho \sigma_2}\bigg) \right\},
	\end{equation*}
	where $X$ is a standard normal random variable. Similarly, the probability of having a project approved upon discovering project 2 is
	\begin{equation*}
		\max \left\{\mathbb{P}\bigg(X \geq -\frac{\mu_1}{\rho \sigma_1}\bigg), \mathbb{P}\bigg(X \geq -\frac{\mu_2}{\sigma_2}\bigg) \right\}.
	\end{equation*}
	From here we can see that when $\sigma_1 \geq \sigma_2 \, \mu_1/(\rho \mu_2)$, both maxima are given by the first probability, and so discovering project 1 is best. When $\sigma_1 \leq  \sigma_2 \, \rho \mu_1/\mu_2$, both maxima are given by the second probability, and so discovering project 2 is best. In the remaining interval, we compare $\mathbb{P}(X \geq -\mu_1/\sigma_1)$ to $\mathbb{P}(X \geq -\mu_2/\sigma_2)$, giving precisely the cutoff in the proposition.\newline
	
	We now turn to the comparison between discovering both projects and discovering only one. We drop the assumption that $\mu_1 < \mu_2$ and instead without loss of generality we normalize $\sigma_1^2 = 1$  (adjusting $\mu_1$, $\mu_2$, and $\sigma_2^2$ as we wish). 
	
	If only project 1 is discovered, then there is a level $v^*$ such that for all $v_1 \geq v^*$, the principal will get at least one project approved, and below that point they will get neither project approved. To be able to get project $1$ approved, the principal needs $v_1 \geq 0$. To be able to get project $2$ approved, she needs $\mathbb{E}[v_2|v_1] \geq 0$, which we can write as
	$$\mathbb{E}[v_2|v_1] = \mu_2 + \rho \sigma_2 (v_1 - \mu_1) \geq 0 \iff v_1 \geq \frac{\rho \sigma_2 \mu_1 - \mu_2}{\rho \sigma_2},$$
	where we assume $\rho > 0$ so as not to reverse the inequality when dividing by $\rho \sigma_2$.
	
	If $\mu_2 \leq \rho \sigma_2 \mu_1$, then the value of $v_1$ needed for $\mathbb{E}[v_2|v_1] \geq 0$ is weakly positive, so $v^* = 0$. Thus in this case, when discovering one project, the principal gets at least one project approved if and only if $v_1 \geq 0$. When discovering both, the principal gets at least one project approved with strictly higher probability, because even when $v_1 < 0$ it is still possible to realize $v_2 \geq 0$, and in that case the second project is approved.
	
	Assume instead that $\mu_2 > \rho \sigma_2 \mu_1$, so that $v^* < 0$. In this case, if $v_1 \geq 0$ then regardless of whether project 2 is discovered, the principal surely gets project 1 approved. If $v_1 \in [v^*, 0)$, then when the principal does not discover project 2, it is surely approved; if the principal does discover project 2, it may or may not be approved. If $v_1 < v^*$, then when the principal does not discover project 2, neither project is approved; if the principal does discover project 2, it may or may not be approved. Thus we can write the difference between discovering both projects and discovering project 1 only as $\delta(v_1)$, a function of the value $v_1$ of the discovered project 1:
	\begin{equation*}
		\delta(v_1) = \begin{cases}
			\mathbb{P}(v_2 \geq 0 \, | \, v_1) - 0 = 1- \Phi\big(- \frac{\mu_2+ \rho \, \sigma_2 \, (v_1-\mu_1)}{\sigma_2 \sqrt{1-\rho^2}}\big):= 1- \Phi(\lambda(v_1)), & v_1 <v^*\\
			\mathbb{P}(v_2 \geq 0 \, | \, v_1) - 1 = - \Phi(- \frac{\mu_2+ \rho \, \sigma_2 \, (v_1-\mu_1)}{\sigma_2 \sqrt{1-\rho^2}}):= -\Phi(\lambda(v_1)), & v_1 \in (v^*,0)\\
			0 & v_1 \geq 0
		\end{cases}
	\end{equation*}
	We aim to show that, in expectation, $\delta(v_1)$ is positive, so that it is better for the principal to discover both projects than to discover project 1 only.
	\begin{align*}
		\int_{-\infty}^\infty \delta(v) \, dv > 0 \\
		\iff \int_{-\infty}^0 \delta(v) \, dv > 0 \\
		\iff \int_{-\infty}^{v^*} \delta(v) \, dv > - \int_{v^*}^0 \delta(v) \, dv\\
		\iff \int_{-\infty}^{v^*} 1- \Phi(\lambda(v)) \, dv > \int_{v^*}^0 \Phi(\lambda(v)) \, dv\\
		\iff \int_{-\infty}^{v^*} (1- \Phi(\lambda(x))) f_1(x) \, dx > \int_{v^*}^0 \Phi(\lambda(x)) f_1(x) \, dx\\
		\impliedby \int_{2v^*}^{v^*} (1- \Phi(\lambda(x))) \, f_1(x) \, dx > \int_{v^*}^0 \Phi(\lambda(x)) \, f_1(x) \, dx \\
		\impliedby (1- \Phi(\lambda(v^*-x))) \, f_1(v^*-x) > \Phi(\lambda(v^* + x)) \, f_1(v^* + x) \qquad \forall x \in (0, -v^*)\\
		\iff f_1(v^*-x) > f_1(v^* + x) \qquad \forall x \in (0, -v^*) \checkmark
	\end{align*}
	Going line-by-line, the first implication is because $\delta(v) = 0$ for $v \geq 0$, and the second by splitting the integral at $v^*$. The third substitutes in the expression for $\delta(v)$, and the fourth does a change of variables to integrate against a uniform density rather than the density of project 1, $f_1(\cdot)$. The fifth truncates the left-hand integral at $2v^*$, which because the function being integrated is strictly positive, is strictly more restrictive than allowing the integral to go all the way to $-\infty$. The sixth examines the function pointwise, pairing each value $v^* - x$ on the left-hand side with its reflection $v^* + x$ about $v^*$ on the right-hand side. The seventh uses the fact that $v^*$ is precisely the value of $v_1$ for which $\mathbb{E}[v_2 | v_1 = v] = \lambda(v) = 0$, and the standard normal $\Phi(\cdot)$ is symmetric about 0; thus the mass above $\lambda(v^* - x)$ is precisely equal to the mass below $\lambda(v^* + x)$. Finally, because $\mu_1 < 0$ and $\mu_2<0$, it must be that $v^* > \mu_1$ because there must be a positive shock to project 1 for either project to be approved. Therefore, because $f_1(v)$ is strictly decreasing in $|v - \mu_1|$ and $|(v^* - x) - \mu_1| < |(v^* + x) - \mu_1|$ for any $x > 0$, the last inequality does indeed hold for any $x \in (0, -v^*)$. 
\end{proof}

\begin{proof}[Proof of Proposition \ref{prop:disfavor}] $\text{} $
	
	Let $w_1 = 1/2$. Then the principal's utility can be written as 
	\begin{equation*}
		\begin{split}
			\mathbb{P}(\text{both approved}) & + w_1 \, \mathbb{P}(\text{only 1 approved}) + w_2 \, \mathbb{P}(\text{only 2 approved})
			\\& = \mathbb{P}(\text{both}) + \frac{1}{2} \, \mathbb{P}(\text{only 1}) + \frac{1}{2} \, \mathbb{P}(\text{only 2})
			\\& = \mathbb{P}(\text{both}) + \frac{1}{2} \, \mathbb{P}(\text{only 1 or only 2})
			\\& = \mathbb{P}(\text{both}) + \frac{1}{2} \, \big(\mathbb{P}(\text{at least 1}) -  \mathbb{P}(\text{both})\big)
			\\& = \frac{1}{2} \, \mathbb{P}(\text{both}) + \frac{1}{2} \, \mathbb{P}(\text{at least 1}).
		\end{split}
	\end{equation*}
	From Propositions \ref{prop:bundle} and \ref{prop:subs}, we can see that both expressions are increasing more quickly in $\sigma_1$ when discovering project 1 than when discovering project 2. Furthermore, for $\sigma_1 < \sigma_2$ we know both terms are greater for discovering project 2, and for $\sigma_1 > (\mu_1/\mu_2) \, \sigma_2$ we know both terms are greater for discovering project 1. Thus there must be a cutoff $\underline{\sigma} \in (\sigma_2, (\mu_1/\mu_2) \, \sigma_2)$ such that discovering project 1 is best if and only if $\sigma_1 > \underline{\sigma}$.
\end{proof}

\begin{proof}[Proof of Theorem \ref{thm:trialBalloon}] $\text{} $
	
	We begin by decomposing the principal's utility as
	\begin{equation*}
		\begin{split}
			\mathbb{P}(\text{both approved}) & + w_1 \, \mathbb{P}(\text{only 1 approved}) + w_2 \, \mathbb{P}(\text{only 2 approved}) 
			\\ & = w_1 \, \mathbb{P}(\text{only 1 or both}) + w_2 \, \mathbb{P}(\text{only 2 or both})
			\\ & = w_1 \, \mathbb{P}(\text{both}) + w_2 \, \mathbb{P}(\text{belief}(v_2) \geq 0).
		\end{split}
	\end{equation*}
	By $\text{belief}(v_2)$, we refer to the agent's belief about the expected value of $v_2$, which may either be the actual value of $v_2$ (if project 2 was discovered) or the conditional expectation $\mathbb{E}[v_2 | v_1]$ (if project 1 was discovered). The last line comes from noting two facts. First, because $\mu_1 < 0 < \mu_2$, regardless of which project is discovered, if project 1 would be approved alone then surely the grand bundle will be approved.  Second, because $\mu_2 > 0$, then if the grand bundle would be approved then surely project 2 would be approved alone. 
	
	Let $X$ be a standard normal random variable. Using the decomposition of the principal's payoffs, the payoff for discovering only project 1 is
	\begin{align*}
		\Pi^1 &= w_1 \, \mathbb{P}\big(v_1 + \mathbb{E}[v_2 | v_1] \geq 0\big) + w_2 \, \mathbb{P}\big(\mathbb{E}[v_2 | v_1]\geq 0\big)\\
		&= w_1 \, \mathbb{P}\bigg(X > \frac{(1-c)\mu}{\sigma_1+\rho \sigma_2}\bigg) + w_2 \, \mathbb{P}\bigg(X > \frac{-c\mu}{\rho \sigma_2}\bigg) \\
		&= 1 - w_1 \, \Phi\bigg(\frac{(1-c)\mu}{\sigma_1+\rho \sigma_2}\bigg) - w_2 \, \Phi\bigg(\frac{-c\mu}{\rho \sigma_2}\bigg),
	\end{align*}
	whereas the payoff for discovering only project 2 is
	\begin{align*}
		\Pi^2 &= w_1 \, \mathbb{P}\big(v_2 + \mathbb{E}[v_1 | v_2] \geq 0\big) + w_2 \, \mathbb{P}\big(v_2 \geq 0\big)\\
		&= w_1 \, \mathbb{P}\bigg(X > \frac{(1-c)\mu}{\sigma_2 + \rho \sigma_1}\bigg) + \frac{1}{2} \, \mathbb{P}\bigg(X > \frac{-c\mu}{\sigma_2}\bigg) \\
		&= 1 - w_1 \Phi\bigg(\frac{(1-c)\mu}{\sigma_2 + \rho \sigma_1}\bigg) - w_2 \, \Phi\bigg(\frac{-c\mu}{\sigma_2}\bigg).
	\end{align*}
	Because $\rho \in (0, 1)$, it is always the case that $-c\mu/\sigma_2 > -c\mu/(\rho \sigma_2)$, and thus whenever $\sigma_1 \geq \sigma_2$, it is clear that $\Pi^1 \geq \Pi^2$. 
	
	We now compare discovering project 1 only to discovering neither:
	\begin{equation*}
		\begin{split}
			w_2 & < 1- w_1 \, \Phi\bigg(\frac{(1-c)\mu}{\sigma_1+\rho \sigma_2}\bigg) - w_2 \, \Phi\bigg(\frac{-c\mu}{\rho \sigma_2}\bigg)
			\\1 - w_1 & < 1- w_1 \, \Phi\bigg(\frac{(1-c)\mu}{\sigma_1+\rho \sigma_2}\bigg) - (1 - w_1) \, \Phi\bigg(\frac{-c\mu}{\rho \sigma_2}\bigg)
			\\(1 - w_1) \, \Phi\bigg(\frac{-c\mu}{\rho \sigma_2}\bigg) & < w_1 - w_1 \, \Phi\bigg(\frac{(1-c)\mu}{\sigma_1+\rho \sigma_2}\bigg)
			\\\frac{1 - w_1}{w_1} \, \Phi\bigg(\frac{-c\mu}{\rho \sigma_2}\bigg) & < 1 -  \Phi\bigg(\frac{(1-c)\mu}{\sigma_1+\rho \sigma_2}\bigg) = \Phi\bigg(\frac{-(1-c)\mu}{\sigma_1+\rho \sigma_2}\bigg),
		\end{split}
	\end{equation*}
	where we use symmetry of the normal distribution to get the last equality. We observe that the left-hand side is decreasing in $c$ whereas the right-hand side is increasing in $c$; furthermore, the expression is continuous. Therefore we have one of three cases:
	\begin{enumerate}
		\item The left-hand side is weakly greater than the right-hand side at $c = 1$ (and therefore for all $c \in [0, 1)$ as well). This case occurs if and only if 
		$$\Phi\bigg(\frac{-\mu}{\rho \sigma_1}\bigg) > \frac{w_1}{2 \, (1 - w_1)}.$$ 
		Then, for fixed $\mu$, discovering project 1 is worse than discovering nothing for all $c \in [0, 1)$.  
		\item The left-hand side is weakly less than the right-hand side at $c = 0$ (and therefore for all $c \in [0, 1)$ as well). This case occurs if and only if 
		$$\frac{1 - w_1}{2w_1} < \Phi\bigg(\frac{-\mu}{\sigma_1 + \rho \sigma_2}\bigg).$$
		Then, for fixed $\mu$, discovering project 1 is better than discovering nothing for all $c \in [0, 1)$.
		\item There is a single intersection at $\underline{c} \in (0, 1)$; discovering project 1 is better than discovering nothing if and only if $c > \underline{c}$.
	\end{enumerate}
	We start with the knife-edge case $w_1 = 1/2$, where the inequality reduces to
	$$\Phi\bigg(\frac{-c\mu}{\rho \sigma_2}\bigg) < \Phi\bigg(\frac{-(1-c)\mu}{\sigma_1+\rho \sigma_2}\bigg),$$
	which holds whenever
	$$\frac{-(1-c)\mu}{\sigma_1+\rho \sigma_2} > \frac{-c\mu}{\rho \sigma_2}  \iff c > \frac{\rho \sigma_2}{\sigma_1 + 2 \rho \sigma_2} := c^*.$$
	Thus the cutoff is interior and constant for all $\mu$.

	Now we apply the Implicit Function Theorem to the function
	$$H(\mu, c(\mu)) = H(\mu, c) = \frac{1 - w_1}{w_1} \, \Phi\bigg(\frac{-c\mu}{\rho \sigma_2}\bigg) - \Phi\bigg(\frac{-(1-c)\mu}{\sigma_1+\rho \sigma_2}\bigg)$$
	in order to compute the derivative $c'(\mu)$ at the value $\underline{c}$ which produces equality, noting that we are guaranteed continuity of this function because $H$ is continuous. The total derivative is
	\begin{equation*}
		\begin{split}
			\bigg(\frac{1 - w_1}{w_1} \, \frac{-c}{\rho \sigma_2} \, \phi\bigg(\frac{-c\mu}{\rho \sigma_2}\bigg) & - \frac{-(1 - c)}{\sigma_1 + \rho \sigma_2} \, \phi\bigg(\frac{-(1-c)\mu}{\sigma_1+\rho \sigma_2}\bigg)\bigg) \, d\mu 
			\\& + \bigg(\frac{1 - w_1}{w_1} \, \frac{-\mu}{\rho \sigma_2} \, \phi\bigg(\frac{-c\mu}{\rho \sigma_2}\bigg) - \frac{\mu}{\sigma_1 + \rho \sigma_2} \phi\bigg(\frac{-(1-c)\mu}{\sigma_1+\rho \sigma_2}\bigg)\bigg) \, dc,
		\end{split}
	\end{equation*}
	and equating to 0 gives
	$$\frac{dc}{d\mu} = \frac{-\frac{(1 - w_1) \, c}{w_1 \rho \sigma_2} \, \phi\big(\frac{-c\mu}{\rho \sigma_2}\big) + \frac{(1 - c)}{\sigma_1 + \rho \sigma_2} \, \phi\big(\frac{-(1-c)\mu}{\sigma_1+\rho \sigma_2}\big)}{\frac{(1 - w_1) \mu}{w_1 \rho \sigma_2} \, \phi\big(\frac{-c\mu}{\rho \sigma_2}\big) + \frac{\mu}{\sigma_1 + \rho \sigma_2} \phi\big(\frac{-(1-c)\mu}{\sigma_1+\rho \sigma_2}\big)}.$$
	The denominator is clearly positive, so we focus on the sign of the numerator, which is positive when
	\begin{equation*}
		\begin{split}
			\frac{1 - w_1}{w_1} \, \frac{c}{1 - c} \, \frac{\sigma_1 + \rho \sigma_2}{\rho \sigma_2} \, \phi\bigg(\frac{-c\mu}{\rho \sigma_2}\bigg) & < \phi\bigg(\frac{-(1-c)\mu}{\sigma_1+\rho \sigma_2}\bigg)
			\\ \iff \frac{1 - w_1}{w_1} \, \frac{c}{1 - c} \, \frac{\sigma_1 + \rho \sigma_2}{\rho \sigma_2} & < \exp\bigg(-\bigg(\bigg(\frac{-(1-c)\mu}{\sigma_1+\rho \sigma_2}\bigg)^2 - \bigg(\frac{-c\mu}{\rho \sigma_2}\bigg)^2 \, \bigg) \bigg/ 2 \bigg).
		\end{split}
	\end{equation*}
	We first check the sign of the exponent, which is negative when
	$$\frac{-(1-c)\mu}{\sigma_1+\rho \sigma_2} > \frac{-c\mu}{\rho \sigma_2}  \iff c > \frac{\rho \sigma_2}{\sigma_1 + 2\rho \sigma_2} = c^*.$$ Note that for $w_1 < 1/2$, discovering policy 1 is always worse for the principal than if $w_1 = 1/2$, regardless of other parameters\textemdash reducing $w_1$ makes spoiling relatively more important. Thus in this case the cutoff $\underline{c}$ is bounded above $(\rho \sigma_2)/(\sigma_1 + 2 \rho \sigma_2)$, so the exponent is negative at all relevant values of $c$. Furthermore, because $c/(1 - c)$ is increasing in $c$, we can use the lower bound $\underline{c} \geq c^*$ to ensure the left-hand side is bounded above
	$$\frac{1 - w_1}{w_1} \, \frac{c^*}{1 - c^*} \, \frac{\sigma_1 + \rho \sigma_2}{\rho \sigma_2} = \frac{1 - w_1}{w_1}.$$
	Because $w_1 < 1/2$, the left-hand side is bounded strictly above 1. Thus because the right-hand side is bounded weakly below 1, the condition fails for all $\mu$ and the cutoff is decreasing.
	
	If instead $w_1 > 1/2$, then by similar logic to the previous case, the exponent will be positive, so the right-hand side is bounded above 1; the left-hand side can be upper-bounded below $1$ by using the upper bound $\underline{c} = c^*$ and simplifying; and we conclude that the cutoff is increasing.
	
	We now finish our characterization. If $(1 - w_1)/(2w_1) < 1/2 \iff w_1 > 1/2$, then case (2) occurs for small enough $\mu$, whereas case (1) never occurs because $w_1/(2-2w_1) > 1/2$. Thus for large $w_1$, there is $\bar{\mu} \in \mathbb{R}_+$ and an increasing cutoff $\underline{c}(\mu)$ such that $\underline{c}(\bar{\mu}) = 0$ and $\lim_{\mu \rightarrow \infty} \underline{c}(\mu) \leq (\rho \sigma_2)/(\sigma_1 + 2 \rho \sigma_2)$. Discovering project 1 is best if and only if $c > \underline{c}(\mu)$.
	
	If instead $w_1 < 1/2$ then case (1) attains for small $\mu$ and case (2) never attains, even in the limit. Then there is \underbar{$\mu$}$\in \mathbb{R}_+$ and a decreasing cutoff $c^*(\mu)$ such that $\underline{c}(\underline{\mu}) = 1$ and $\lim_{\mu \rightarrow \infty} \underline{c}(\mu) \geq (\rho \sigma_2)/(\sigma_1 + 2 \rho \sigma_2)$. Discovering project 1 is best if and only if $c > \underline{c}(\mu)$.
	
	Finally, we know that in fact both limits must be precisely equal to $c^*$. In the limit, because the normal cdf has exponential tails, so the ratio
	$$\Phi\bigg(\frac{-c\mu}{\rho \sigma_2}\bigg) / \Phi\bigg(\frac{-(1-c)\mu}{\sigma_1+\rho \sigma_2}\bigg)$$ 
	is either 0 if $c > c^*$ and the numerator dominates, $\infty$ if $c < c^*$ and the denominator dominates, or 1 if $c = c^*$. Thus in order for this ratio to match the value $(1 - w_1/w_1)$ as $\mu$ grows large, the cutoff $\underline{c}$ must converge to $c^*$ (and in fact must do so at an exponential rate). Because increasing $w_1$ (regardless of whether it is below or above $1/2$) for fixed $\mu$ weakly decreases $\bar{c}$, it must be that the rate of convergence of $\underline{c}(\mu)$ to $c^*$ is increasing in $|w_1 - 1/2|$.
\end{proof}
	
\begin{proof}[Proof of Theorem \ref{thm:optSingle}] $\text{} $
	
	For this theorem, we begin by proving two auxiliary results. The first is a corollary of Theorem \ref{thm:trialBalloon} which applies that approach to compare discovering project 2 to no discovery:
	\begin{corollary}
		\label{cor:comparison}
		Let $\mu_1 = -\mu < 0 \leq c \mu = \mu_2$ for $c \in [0, 1)$.\\
		For any $\mu$, there is a cutoff $\underline{c}(\mu)$ such that it is better to discover project 2 than to discover neither project if and only if $c > \underline{c}(\mu)$.\\
		The function $\underline{c}(\mu)$ is continuous and has the following properties:
		\begin{enumerate}
			\item If $w_1 = 1/2$, then
			$$\underline{c}(\mu) = c^{**} := \frac{\sigma_2}{\rho \sigma_1 + 2\sigma_2} \qquad \text{for all } \mu.$$
			\item If $w_1 > 1/2$, there is $\mu^* \in \mathbb{R}_+$ such that $\underline{c}(\mu) = 0$ for all $\mu \in [0, \mu^*]$. For $\mu > \mu^*$, the cutoff $\underline{c}(\mu)$ is strictly increasing with
			$\lim_{\mu \rightarrow \infty} \underline{c}(\mu) = c^{**}.$
			\item If $w_1 < 1/2$, there is $\mu^* \in \mathbb{R}_+$ such that $\underline{c}(\mu) = 1$ for all $\mu \in [0, \mu^*]$. For $\mu > \mu^*$, the cutoff $\underline{c}(\mu)$ is strictly decreasing with
			$\lim_{\mu \rightarrow \infty} \underline{c}(\mu) = c^{**}.$
		\end{enumerate}
	\end{corollary}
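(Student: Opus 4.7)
My plan is to mirror the proof of Theorem \ref{thm:trialBalloon} almost verbatim, since the corollary differs only in which project's discovery we compare against the no-discovery benchmark. The payoff $\Pi^2$ from discovering project 2 is already computed in that earlier proof; the no-discovery payoff is $w_2 = 1 - w_1$ because $\mu_2 \geq 0$ guarantees project 2 alone is approved. Setting $\Pi^2 > w_2$ and rearranging yields an inequality structurally identical to the one used for project 1, namely
$$\frac{1 - w_1}{w_1}\,\Phi\!\left(\frac{-c\mu}{\sigma_2}\right) < \Phi\!\left(\frac{-(1-c)\mu}{\sigma_2 + \rho \sigma_1}\right),$$
where the only change from the Theorem \ref{thm:trialBalloon} inequality is the substitution $\rho \sigma_2 \mapsto \sigma_2$ inside the left-hand argument and $\sigma_1 + \rho \sigma_2 \mapsto \sigma_2 + \rho \sigma_1$ inside the right-hand argument.

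Next I would verify the three structural facts used in Theorem \ref{thm:trialBalloon}: the left-hand side is continuous and strictly decreasing in $c$, the right-hand side is continuous and strictly increasing in $c$, and the two sides therefore cross at most once. This immediately partitions the parameter space into the same three cases (no discovery always dominates, project 2 discovery always dominates, or a unique interior cutoff). The knife-edge $w_1 = 1/2$ collapses the inequality to $-(1-c)\mu/(\sigma_2 + \rho\sigma_1) > -c\mu/\sigma_2$, which solves to the constant cutoff $c^{**} = \sigma_2 / (\rho \sigma_1 + 2 \sigma_2)$, confirming the explicit formula in part (1).

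For parts (2) and (3), I would apply the Implicit Function Theorem to the analogous function
$$H(\mu, c) = \frac{1 - w_1}{w_1}\,\Phi\!\left(\frac{-c\mu}{\sigma_2}\right) - \Phi\!\left(\frac{-(1-c)\mu}{\sigma_2 + \rho \sigma_1}\right)$$
and compute $dc/d\mu$ at the cutoff. By the same exponential-ratio argument as in Theorem \ref{thm:trialBalloon}, the sign of the numerator reduces to comparing the quantity $\frac{1-w_1}{w_1}\,\frac{c}{1-c}\,\frac{\sigma_2 + \rho \sigma_1}{\sigma_2}$ to an exponential whose sign flips at precisely $c = c^{**}$. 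Bounding $\underline{c}(\mu)$ from above or below by $c^{**}$ (depending on whether $w_1$ is above or below $1/2$) and evaluating the left-hand side at $c = c^{**}$ yields the clean simplification to $(1-w_1)/w_1$, which is then compared to $1$ to pin down the sign of $dc/d\mu$. This gives monotonicity of $\underline{c}(\mu)$ on the interior.

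Finally, the limit behavior follows from the exponential tails of $\Phi$: the ratio of the two terms in $H$ converges to $0$, $\infty$, or $1$ depending on whether $c$ lies above, below, or at $c^{**}$, so any interior cutoff must converge to $c^{**}$ as $\mu \to \infty$. The boundary cases defining $\mu^*$ are obtained by checking when $H(\mu, 0)$ and $H(\mu, 1)$ change sign, exactly as in the original proof. I anticipate no genuine obstacle: the only bookkeeping subtlety is keeping straight the two pairs of substitutions when transplanting each inequality from the Theorem \ref{thm:trialBalloon} proof, and verifying that the simplification of the left-hand side at $c = c^{**}$ still reduces cleanly to $(1-w_1)/w_1$, which it does because the prefactor $(\sigma_2 + \rho \sigma_1)/\sigma_2$ cancels with $c^{**}/(1 - c^{**}) = \sigma_2 / (\sigma_2 + \rho \sigma_1)$.
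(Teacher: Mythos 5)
Your proposal is correct and follows essentially the same route as the paper, which itself proves this corollary by emulating the proof of Theorem \ref{thm:trialBalloon} with the substitutions $\rho\sigma_2 \mapsto \sigma_2$ and $\sigma_1 + \rho\sigma_2 \mapsto \sigma_2 + \rho\sigma_1$, yielding the same inequality $\frac{1-w_1}{w_1}\,\Phi\big(\tfrac{-c\mu}{\sigma_2}\big) < \Phi\big(\tfrac{-(1-c)\mu}{\sigma_2+\rho\sigma_1}\big)$, the same knife-edge cutoff $c^{**}$, and the same implicit-function and exponential-tail arguments for monotonicity and the limit. Your explicit check that the prefactor cancels so the left-hand side at $c = c^{**}$ reduces to $(1-w_1)/w_1$ is exactly the bookkeeping the paper leaves implicit.
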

	\begin{proof}
		To compare discovery of project 2 to no discovery, we emulate the proof of Theorem \ref{thm:trialBalloon}. We omit the algebraic details, which are exactly as in that proof, and focus on the key conclusions.
		
		The cutoff in the knife-edge case $w_1 = 1/2$ is given by
		$$\frac{-(1-c)\mu}{\rho \sigma_1+\sigma_2} > \frac{-c\mu}{\sigma_2}  \iff c > \frac{\sigma_2}{\rho \sigma_1 + 2 \sigma_2} := c^{**}.$$
		
		Once again decreasing $w_1$ harms the principal (relative to no discovery) and increasing $w_1$ helps her, so the cutoff for $w_1 < 1/2$ is increasing in $\mu$ and converges to $c^{**}$, whereas the cutoff for $w_2$ is decreasing in $\mu$ and converges to $c^{**}$.
	\end{proof}

	The second result is a characterization of whether discovering project 1 or discovering project 2 dominates in the limit:
	\begin{lemma}
		\label{lm:smallVar}	
		Let $\mu_1 = -\mu < 0 \leq c \mu = \mu_2$, for $c \in [0, 1)$ and let $\sigma_1 < \sigma_2$, so that project 1 is disfavored but not a trial balloon. Then,
		\begin{enumerate}
			\item For any $c \in (c^*, c^{**})$ there exists $\underline{\mu}$ such that discovering project 1 is better than discovering project 2 for all
			$\mu > \underline{\mu}$.
			\item For any $c \in (c^{**}, 1)$ there exists $\bar{\mu}$ such that discovering project 2 is better than discovering project 1 for all $\mu > \bar{\mu}$.
		\end{enumerate}
	\end{lemma}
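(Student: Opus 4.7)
The plan is to work directly with the closed-form payoffs derived in the proof of Theorem \ref{thm:trialBalloon}. Writing
\[
\Pi^1 - \Pi^2 \;=\; w_1\bigl[\Phi(\alpha_2) - \Phi(\alpha_1)\bigr] \;+\; w_2\bigl[\Phi(\beta_2) - \Phi(\beta_1)\bigr],
\]
with $\alpha_1 := (1-c)\mu/(\sigma_1 + \rho\sigma_2)$, $\alpha_2 := (1-c)\mu/(\sigma_2 + \rho\sigma_1)$, $\beta_1 := -c\mu/(\rho\sigma_2)$, and $\beta_2 := -c\mu/\sigma_2$. Because $\sigma_1 < \sigma_2$ and $\rho \in (0,1)$, we have $\alpha_1 > \alpha_2 > 0$ and $\beta_1 < \beta_2 < 0$, so the $w_1$-bracket (the improvement gap lost by replacing the higher-variance project) is negative and the $w_2$-bracket (the spoiling gap saved) is positive. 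The problem reduces to comparing the rates at which these two brackets decay as $\mu \to \infty$.

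For Part 1, I would dispatch with a transitive chain rather than a direct comparison. The derivation in Theorem \ref{thm:trialBalloon} of the cutoff for discovering project 1 versus no discovery does not use $\sigma_1 \geq \sigma_2$, so the conclusion that the cutoff converges to $c^*$ remains valid here. For $c \in (c^*, c^{**})$, Theorem \ref{thm:trialBalloon} then yields $\mu_1^\dagger$ such that $\Pi^1$ exceeds the no-discovery payoff for $\mu > \mu_1^\dagger$, and Corollary \ref{cor:comparison} yields $\mu_2^\dagger$ such that the no-discovery payoff exceeds $\Pi^2$ for $\mu > \mu_2^\dagger$. Taking $\underline{\mu} := \max\{\mu_1^\dagger, \mu_2^\dagger\}$ and chaining the two inequalities gives the result.

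For Part 2 the transitive route fails, since both discovery rules beat no discovery when $c > c^{**}$, so I would proceed by Gaussian tail asymptotics via Mills' ratio, $\Phi(-x) = \tfrac{\phi(x)}{x}(1 + o(1))$ as $x \to \infty$. I would first show that $\Phi(\beta_1)$ is negligible relative to $\Phi(\beta_2)$, and $1 - \Phi(\alpha_1)$ is negligible relative to $1 - \Phi(\alpha_2)$, since the strict inequalities $|\beta_1| > |\beta_2|$ and $\alpha_1 > \alpha_2$ make the ratios of the respective tails decay super-polynomially. The two brackets then reduce at leading order to
\[
w_2\,\Phi(\beta_2) \;\sim\; \frac{w_2\,\sigma_2}{c\mu\sqrt{2\pi}}\exp\!\left(-\tfrac{1}{2}(c\mu/\sigma_2)^2\right)
\]
and
\[
w_1\,[1 - \Phi(\alpha_2)] \;\sim\; \frac{w_1(\sigma_2+\rho\sigma_1)}{(1-c)\mu\sqrt{2\pi}}\exp\!\left(-\tfrac{1}{2}\bigl((1-c)\mu/(\sigma_2+\rho\sigma_1)\bigr)^2\right),
\]
and comparing the quadratic exponents, $c/\sigma_2$ exceeds $(1-c)/(\sigma_2+\rho\sigma_1)$ iff $c(2\sigma_2 + \rho\sigma_1) > \sigma_2$, i.e.\ iff $c > c^{**}$. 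Hence for $c > c^{**}$, the improvement gap decays strictly more slowly than the spoiling gap, so $\Pi^1 - \Pi^2 < 0$ for all $\mu$ above some $\bar{\mu}$.

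The main obstacle is the asymptotic bookkeeping. Each bracket is a difference of two quantities that individually tend to $0$ (in the spoiling case) or to $1$ (in the improvement case), and a naive comparison of leading exponents could be spoiled by the polynomial prefactors or by the subtracted term. I would therefore need to show explicitly that in each bracket the subdominant term is suppressed by an additional factor of the form $\exp(-\mathrm{const} \cdot \mu^2)$, using the strict gaps $|\beta_1| - |\beta_2| > 0$ and $\alpha_1 - \alpha_2 > 0$, so that the comparison of leading quadratic exponents truly pins down the sign of $\Pi^1 - \Pi^2$ for large $\mu$.
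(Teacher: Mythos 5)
Your proposal is correct and takes essentially the same route as the paper: Part 1 is exactly the paper's transitive chain (project 1 beats no discovery by Theorem \ref{thm:trialBalloon}, which indeed does not use $\sigma_1 \geq \sigma_2$ for that comparison, while no discovery beats project 2 by Corollary \ref{cor:comparison}). Part 2 simply makes explicit, via Mills' ratio and the comparison of the quadratic exponents of $\Phi(-c\mu/\sigma_2)$ and $1-\Phi\big((1-c)\mu/(\sigma_2+\rho\sigma_1)\big)$, the tail-dominance argument the paper states verbally, recovering the same threshold $c^{**}$.
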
 
	\begin{proof}
		We make use of the expressions $\Pi^1$ and $\Pi^2$ from the proof of Theorem \ref{thm:trialBalloon}. 
		
		When $c \in (c^*, c^{**})$ we know from Theorem \ref{thm:trialBalloon} that there exists $\mu'$ such that discovering project 1 is preferred to no discovery for all $\mu > \mu'$, and from Corollary \ref{cor:comparison} that there exists $\mu''$ such that no discovery is preferred to discovering project 2 for all $\mu > \mu''$. Thus there exists $\bar{\mu} = \max \left\{\mu', \mu''\right\}$ such that discovering project 1 is preferred to discovering project 2. For $c < c^*$, we know no discovery will dominate so there is no need to establish a limit result.
		
		When $c > c^{**}$, we know that discovering either project is eventually preferred to no discovery. However, because of the exponential nature of the normal distribution, we also know that the improvement probability from discovering project 2 dominates\footnote{By dominance we mean that the likelihood ratio between the improvement probability and the other specified probability becomes arbitrarily large.} not only the spoiling probability from discovering either project, but also the improvement probability from discovering project 1. Thus for $\mu$ large enough, discovering project 2 is better than discovering project 1.
	\end{proof}

	We now combine these two intermediate results to state the conclusion of the theorem. To make precise the cutoffs described in the text, we divide the result into the cases $w_1 = 1/2$, $w_1 < 1/2$, and $w_1 > 1/2$. We also describe a fourth region of very low $\mu$ for the latter two cases.
	
	First, when $w_1 = 1/2$, the cutoffs are constant at $c^*$ and $c^{**}$. Thus the low-$\mu$ case described in the main text vanishes. In the high-$\mu$ case, there is no discovery if $c < c^*$, discovery of project 1 if $c \in (c^*, c^{**})$, and discovery of an unknown project if $c > c^{**}$. In the limit case, there is no discovery for $c < c^*$, discovery of project 1 for $c \in (c^*, c^{**})$, and discovery of project 2 for $c > c^{**}$.
	
	When $w_1 < 1/2$,
	\begin{enumerate}
		\item Very low $\mu$: both cutoffs are 1, so there is no discovery for all $c$.
		\item Low $\mu$: This region is lower-bounded by the infimum value of $\mu$ at which $\min \left\{c_1(\mu), c_2(\mu)\right\} < 1$. There is discovery for high enough $c$ but it is unknown which project is discovered at each $c$.
		\item High $\mu$: This region is lower-bounded by the value of $\mu$ where $c_1(\mu) = c^{**}$. Because $c_1(\mu)$ is decreasing, there is no discovery for $c < c_1(\mu)$, discovery of project 1 for $c \in (c_1(\mu), c_2(\mu))$, and discovery of an unknown project for $c > c_2(\mu)$.
		\item Limit $\mu$: there is no discovery for $c < c^*$, discovery of project 1 for $c \in (c^*, c^{**})$, and discovery of project 2 for $c > c^{**}$.
	\end{enumerate}
	
	Finally, when $w_1 > 1/2$,
	\begin{enumerate}
		\item Very low $\mu$: both cutoffs are 0, so there is discovery of an unknown project for all $c$.
		\item Low $\mu$: This region is lower-bounded by the infimum value of $\mu$ at which $\max \left\{c_1(\mu), c_2(\mu)\right\} > 0$. There is discovery for high enough $c$ but it is unknown which project is discovered at each $c$.
		\item High $\mu$: This region is lower-bounded by the value of $\mu$ where $c_2(\mu) = c^{*}$. Because $c_2(\mu)$ is increasing, there is no discovery for $c < c_1(\mu)$, discovery of project 1 for $c \in (c_1(\mu), c_2(\mu))$, and discovery of an unknown project for $c > c_2(\mu)$.
		\item Limit $\mu$: there is no discovery for $c < c^*$, discovery of project 1 for $c \in (c^*, c^{**})$, and discovery of project 2 for $c > c^{**}$.
	\end{enumerate}

	In fact, in the high-$\mu$ cases, we can ensure that for some neighborhood of $c_2(\mu)$, discovery of project 1 is still optimal. This conclusion follows because when $c = c_2(\mu)$, the principal is precisely indifferent between discovering project 2 and no discovery. However, the principal strictly prefers discovering project 1 to no discovery. Because all utilities are continuous, it must be that for some $\varepsilon$-neighborhood above $c = c_2(\mu)$, discovering project 1 remains optimal.
\end{proof}

We now present a new result which establishes that the dominance region for project 1 contracts in the large-$\mu$ limit if $w_1 < 1/2$ and expands if $w_1 > 1/2$.
\begin{lemma}[Limit Behavior of the Dominance Region for Project 1]
	\label{lm:compStat}
	Let $\mu_1 = -\mu < 0 \leq c \mu = \mu_2$ for $c \in [0, 1)$ and let $\sigma_1 < \sigma_2$. Then the following are true:
	\begin{itemize}
		\item There is $\underline{\mu}$ large enough such that, for all $\mu > \underline{\mu}$, the dominance region where project 1 is discovered differs by at most measure $\varepsilon > 0$ from $(c_1(\mu), c_2(\mu))$.
		\item For all $\mu > \underline{\mu}$, if $w_1 < 1/2$ then the measure of that dominance region is strictly decreasing in $\mu$.
		\item For all $\mu > \underline{\mu}$, if $w_1 > 1/2$ then the measure of that dominance region is strictly increasing in $\mu$.
	\end{itemize}
\end{lemma}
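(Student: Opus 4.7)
The strategy is to analyze all three bullets via an asymptotic expansion of $c_1(\mu)$, $c_2(\mu)$, and the true upper endpoint $c_b(\mu)$ of the dominance region, where $c_b(\mu)$ is defined implicitly by $U_1(c_b,\mu) = U_2(c_b,\mu)$ (with $U_0, U_1, U_2$ denoting principal utility under no discovery, project 1 discovery, and project 2 discovery, respectively).

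For the first bullet, I would show that to leading exponential order in $\mu$, the equation $U_1 = U_2$ reduces to the defining equation for $c_2(\mu)$ from Corollary \ref{cor:comparison}. Writing out $U_1 - U_2$ in terms of the four $\Phi$ terms that appear, the dominant exponentials as $\mu \to \infty$ are those with the least-negative arguments, namely $-(1-c)\mu/(\sigma_2+\rho\sigma_1)$ and $-c\mu/\sigma_2$ (using $\sigma_1 < \sigma_2$ and $\rho < 1$). Setting their balance to zero gives exactly the $c_2$ equation, so $c_b(\mu) - c_2(\mu)$ is controlled by the strictly smaller subdominant exponentials and is thus $O(e^{-\delta \mu^2})$ for some $\delta > 0$. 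Combined with $c_1(\mu) \to c^*$ and $c_2(\mu), c_b(\mu) \to c^{**}$, this shows that the actual dominance region $(c_1(\mu), c_b(\mu))$ differs in measure from the approximating interval $(c_1(\mu), c_2(\mu))$ by at most $\varepsilon$ for $\mu$ sufficiently large.

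For the monotonicity bullets, I would derive a first-order expansion of each cutoff around its limit. Using the Mills-ratio asymptotic $\Phi(-x) \sim \phi(x)/x$ to take logs of the defining equations and linearizing in $c$ around $c^*$ (respectively $c^{**}$), the quadratic-in-exponent terms vanish at the limit values by construction, leaving a balance whose leading solution is
\[ c_1(\mu) - c^* = \frac{\log((1-w_1)/w_1)\,\rho\sigma_2(\sigma_1+\rho\sigma_2)}{\mu^2} + o(1/\mu^2), \]
\[ c_2(\mu) - c^{**} = \frac{\log((1-w_1)/w_1)\,\sigma_2(\rho\sigma_1+\sigma_2)}{\mu^2} + o(1/\mu^2). \]
Subtracting and using $c_b(\mu) - c_2(\mu) = O(e^{-\delta\mu^2})$ from the first step, the measure of the dominance region satisfies
\[ c_b(\mu) - c_1(\mu) = (c^{**}-c^*) + \frac{\log((1-w_1)/w_1)\,\sigma_2^2(1-\rho^2)}{\mu^2} + o(1/\mu^2), \]
since $\sigma_2(\rho\sigma_1+\sigma_2) - \rho\sigma_2(\sigma_1+\rho\sigma_2) = \sigma_2^2(1-\rho^2) > 0$. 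For $w_1 < 1/2$, $\log((1-w_1)/w_1) > 0$, so the $1/\mu^2$ correction is positive and strictly decreasing in $\mu$; for $w_1 > 1/2$, the sign flips and the correction is negative and strictly increasing in $\mu$ (toward zero). Choosing $\underline{\mu}$ large enough that the $-2/\mu^3$-order derivative of the leading correction dominates the derivative of the $o(1/\mu^2)$ remainder yields the claimed strict monotonicity on $(\underline{\mu}, \infty)$.

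The main obstacle is the rigorous control of error terms: verifying that the $o(1/\mu^2)$ remainders in each expansion are differentiable with derivatives of smaller order than $1/\mu^3$, and that $c_b(\mu) - c_2(\mu)$ genuinely decays exponentially rather than as a polynomial perturbation. Both follow from standard termwise differentiation of the Mills-ratio expansion for the normal tail, but require careful bookkeeping to separate the multiple exponential scales inside $U_1 - U_2$ and to ensure uniformity in a neighborhood of each limit cutoff.
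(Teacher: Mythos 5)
Your proposal is correct in substance but takes a genuinely different route from the paper. The paper argues heuristically: it invokes Lemma \ref{lm:smallVar} to say that for $c$ above $c_2(\mu)$ only project 2's improvement effect is first-order (your exponential-closeness claim for $c_b(\mu)-c_2(\mu)$ is the sharpened version of this), and then approximates the rates of change $c_1'(\mu)$ and $c_2'(\mu)$ by the $\mu$-derivatives of the spoiling (resp.\ improvement) tail probabilities, comparing $\phi(-c\mu/(\rho\sigma_2))$ against $\phi(-c\mu/\sigma_2)$ to conclude which endpoint moves faster. You instead expand the cutoffs themselves around their limits via the Mills-ratio/log-linearization of the defining equations, obtaining $c_1(\mu)-c^* \approx \log\bigl((1-w_1)/w_1\bigr)\rho\sigma_2(\sigma_1+\rho\sigma_2)/\mu^2$ and $c_2(\mu)-c^{**} \approx \log\bigl((1-w_1)/w_1\bigr)\sigma_2(\rho\sigma_1+\sigma_2)/\mu^2$; I verified these constants and the resulting width expansion with the $\sigma_2^2(1-\rho^2)$ coefficient, and they are right. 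Your version is actually quantitatively sharper: the true $c_i'(\mu)$ are of order $1/\mu^3$ with ratio converging to $\sigma_2(\rho\sigma_1+\sigma_2)/\bigl(\rho\sigma_2(\sigma_1+\rho\sigma_2)\bigr)$, not exponentially separated as the paper's derivative heuristic would suggest, though both arguments reach the same sign conclusion for each case of $w_1$. What the paper's route buys is brevity and a direct link to the spoiling-versus-improvement narrative; what yours buys is an explicit second-order formula for the measure of the dominance region and a cleaner derivation of monotonicity. Two caveats, neither fatal: the remainder-control step you flag (differentiability of the $o(1/\mu^2)$ terms and uniformity near the limit cutoffs) is exactly where the paper's own proof is also informal, so completing it would make your argument strictly more rigorous than the original; and your identification of the upper boundary as the single root of $U_1=U_2$ presumes a single-crossing structure for large $\mu$ that the paper acknowledges can fail at finite $\mu$ (disconnected regions), so you should state explicitly that the leading-order balance, which coincides with the defining equation of $c_2(\mu)$, pins down all crossings up to an exponentially small set once $\mu$ exceeds $\underline{\mu}$.
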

\begin{proof}[Proof of Lemma \ref{lm:compStat}]  $\text{} $
	Thanks to Lemma \ref{lm:smallVar} we know that once $\mu > \underline{\mu}$, for any $c > c_2(\mu) \geq c^{**}$, project 2's greater improvement probability is the most important force driving disclosure; that is, it is first-order whereas all other effects are second-order. Thus we have two conclusions. First, the mass of $c > c_2(\mu)$ in the dominance region of project 1 is vanishing as $\mu$ gets large, because only the improvement effect of project 2 will dominate the payoff of no discovery. This result is the first bullet point in the lemma. Second, it cannot be that some $c > c_2(\mu)$ is in the dominance region of project 1 for for $\mu > \underline{\mu}$ but not for $\mu' > \mu$, because the benefits of discovering project 1 will vanish faster than those of discovering project 2; the vanishing established in the first bullet point is therefore monotonic. We thus restrict attention to the behavior of the interval $(c_1(\mu), c_2(\mu))$.
	
	To verify that the range of $c$ where project 1 is discovered narrows with increasing $\mu$ when $w_1 < 1/2$, note that there are two effects: values $c < c_1(\mu)$ are added as $c_1(\mu)$ decreases and values $c > c_2(\mu)$ are removed as $c_2(\mu)$ decreases. In the limit, the rate of decrease of each is driven by the vanishing spoiling effect, so we can write
	\footnotesize
	$$c_1'(\mu) \approx \frac{d}{d\mu} \Phi\bigg(\frac{-c\mu}{\rho \sigma_2}\bigg) = -\frac{c}{\rho \sigma_2} \, \phi\bigg(\frac{-c\mu}{\rho \sigma_2}\bigg) \qquad \text{and} \qquad c_2'(\mu) \approx \frac{d}{d\mu} \Phi\bigg(\frac{-c\mu}{\sigma_2}\bigg) = -\frac{c}{\sigma_2} \, \phi\bigg(\frac{-c\mu}{\sigma_2}\bigg).$$\normalsize
	In the limit the difference in the $\phi(\cdot)$ terms overcomes the difference in coefficients, and so $c_1(\mu)$ decreases slower than $c_2(\mu)$. Thus more values are removed from the range $(c_1(\mu), c_2(\mu))$ than added.
	
	When $w_1 > 1/2$, values $c < c_2(\mu))$ are added as $c_2(\mu)$ increases and values $c < c_1(\mu)$ are removed as $c_1(\mu)$ increases. Both increases are driven by vanishing improvement, so we have
	\begin{equation*}
		\begin{split}
			& c_1'(\mu) \approx \frac{d}{d\mu} \Phi\bigg(\frac{-(1 - c)\mu}{\sigma_1 + \rho \sigma_2}\bigg) = -\frac{(1 - c)}{\sigma_1 + \rho \sigma_2} \, \phi\bigg(\frac{-(1 - c)\mu}{\sigma_1 +  \rho \sigma_2}\bigg) \qquad \text{and} 
			\\& c_2'(\mu) \approx \frac{d}{d\mu} \Phi\bigg(\frac{-(1 - c)\mu}{\rho \sigma_1 + \sigma_2}\bigg) = -\frac{(1 - c)}{\rho \sigma_1 + \sigma_2} \, \phi\bigg(\frac{-(1 - c)\mu}{\rho \sigma_1 + \sigma_2}\bigg),
		\end{split}
	\end{equation*}
	and $c_1(\mu)$ increases slower in the limit so the overall range where project 1 is discovered increases, similar to the 
	
	These properties may not hold away from the limit because we can no longer approximate the discovery region of project 1 by the interval $(c_1(\mu), c_2(\mu))$, and even if we could the rate of change of $c_1(\mu)$ and $c_2(\mu)$ depend on both spoiling and improvement rather than only one of the two effects.
\end{proof}

\begin{proof}[Proof of Theorem \ref{thm:both}] $\text{} $
		
	If $\mu_1 < 0$ and $\mu_2 < 0$, we may decompose utility as in Proposition \ref{prop:disfavor} to get
	\begin{equation*}
		\begin{split}
			\mathbb{P}(\text{both approved}) & + w_1 \, \mathbb{P}(\text{only 1 approved}) + w_2 \, \mathbb{P}(\text{only 2 approved})
			\\& = \frac{1}{2} \, \mathbb{P}(\text{both}) + \frac{1}{2} \, \mathbb{P}(\text{at least 1}).
		\end{split}
	\end{equation*}
	Then, as in the proof of that result, we may combine Propositions \ref{prop:bundle} and \ref{prop:subs} to get the desired conclusion.
		
	Now let $\mu_1 = -\mu < 0 \leq c \mu = \mu_2$ for $c \in [0, 1)$. Assuming $\sigma_1 \geq \sigma_2$ allows us to apply Theorem \ref{thm:trialBalloon} and ignore the possibility of discovering project 2. We break this proof into two lemmas. The first establishes dominance regions in the interval $[0, 1)$ of possible values of $c$ for discovering both projects or discovering only project 1. The second compares discovering both projects to no discover.
	
	\begin{lemma}
		\label{lm:bothVs1}
		Let $\mu_1 = -\mu < 0 \leq c \mu = \mu_2$ for $c \in [0, 1)$. Then, if 
		$$c < c_\ell := \frac{\rho \sigma_2}{\rho \sigma_2 + \sqrt{\sigma_1^2 + \sigma_2^2 +2\rho \sigma_1 \sigma_2}},$$
		there exists $\bar{\mu}$ such that discovering project 1 is better than full discovery for all $\mu > \bar{\mu}$. If instead
		$$c > c_h := \frac{\sigma_2}{\sigma_1 + (1 + \rho) \, \sigma_2},$$
		then there exists $\underline{\mu}$ such that full discovery is better than discovering project 1 for all $\mu > \underline{\mu}$.
	\end{lemma}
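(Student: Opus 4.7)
The plan is to express both $\Pi^1$ and $\Pi^{\text{both}}$ as explicit combinations of Gaussian tail probabilities and then compare their asymptotic exponential decay rates as $\mu \to \infty$. Because $\mu_2 = c\mu \geq 0$ while $\mu_1 + \mu_2 = -(1-c)\mu < 0$, the baseline is $\Pi^{\text{no}} = w_2 = \tfrac{1}{2}$. From the proof of Theorem \ref{thm:trialBalloon}, $\Pi^1 - \tfrac{1}{2} = \tfrac{1}{2}\Phi_a - \tfrac{1}{2}\Phi_b$, where $\Phi_a := \Phi\bigl(-\tfrac{(1-c)\mu}{\sigma_1+\rho\sigma_2}\bigr)$ is the improvement probability and $\Phi_b := \Phi\bigl(-\tfrac{c\mu}{\rho\sigma_2}\bigr)$ is the spoiling probability. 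For full discovery, a case analysis on the signs of $v_1+v_2$, $v_1$, and $v_2$ --- using that under $w_1 = w_2 = \tfrac{1}{2}$ the grand bundle is the principal's best option whenever $v_1+v_2 \geq 0$ --- yields $\Pi^{\text{both}} - \tfrac{1}{2} = \tfrac{1}{2}\Phi_c - \tfrac{1}{2}P_d$, where $\Phi_c := \Phi\bigl(-\tfrac{(1-c)\mu}{\sigma_{12}}\bigr)$ with $\sigma_{12} := \sqrt{\sigma_1^2 + \sigma_2^2 + 2\rho\sigma_1\sigma_2}$, and $P_d := \mathbb{P}(v_1 < 0, v_2 < 0)$. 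Subtracting, $\Pi^1 - \Pi^{\text{both}} = \tfrac{1}{2}(\Phi_a - \Phi_c) + \tfrac{1}{2}(P_d - \Phi_b)$.

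Next, associate to each term an asymptotic rate: $\Phi(-\mu r) \sim \tfrac{1}{\mu r \sqrt{2\pi}} e^{-\mu^2 r^2/2}$, so $\Phi_a, \Phi_b, \Phi_c$ have rates $\alpha = \tfrac{(1-c)^2}{2(\sigma_1+\rho\sigma_2)^2}$, $\beta = \tfrac{c^2}{2\rho^2\sigma_2^2}$, $\gamma = \tfrac{(1-c)^2}{2\sigma_{12}^2}$, while a conditioning argument (developed below) shows $P_d$ has rate $\delta = \tfrac{c^2}{2\sigma_2^2}$. Universal inequalities $\gamma < \alpha$ and $\delta < \beta$ follow from $\sigma_{12} > \sigma_1 + \rho\sigma_2$ and $\rho < 1$. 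By elementary algebra, $\gamma < \beta$ iff $c > c_\ell$ and $\alpha < \delta$ iff $c > c_h$. When $c < c_\ell$, the full rate ordering is $\delta < \beta < \gamma < \alpha$, so $P_d$ strictly dominates the three other tails in the sum; since it enters with a positive sign, $\Pi^1 - \Pi^{\text{both}} > 0$ for all sufficiently large $\mu$. When $c > c_h$, the ordering becomes $\gamma < \alpha < \delta < \beta$, so $\Phi_c$ dominates, and because it enters with a negative sign, $\Pi^1 - \Pi^{\text{both}} < 0$ for all sufficiently large $\mu$. Standard Mills-ratio bounds justify that strict rate inequalities produce unbounded ratios, so polynomial prefactors cannot overturn the exponential ordering.

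The main technical obstacle is precisely the claim that the bivariate probability $P_d$ decays at the univariate rate $\delta$ and not faster, which could otherwise invalidate the rate comparison when $c$ is small. Condition on $v_2$: then $v_1 \mid v_2 \sim N\bigl(-\mu + \rho(\sigma_1/\sigma_2)(v_2 - c\mu),\, (1-\rho^2)\sigma_1^2\bigr)$. For any $v_2 < c\mu$ the conditional mean lies below $-\mu$, so $\mathbb{P}(v_1 < 0 \mid v_2 = x) \to 1$ uniformly in $x < 0$ as $\mu \to \infty$, and hence $P_d / \mathbb{P}(v_2 < 0) \to 1$. Since $\mathbb{P}(v_2 < 0) = \Phi(-c\mu/\sigma_2)$ has rate $\delta$, so does $P_d$. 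A secondary care-point is that the lemma does not claim a single crossing: the regions $[0, c_\ell)$ and $[c_h, 1)$ are deliberately stated as sufficient conditions, with the intermediate regime $c \in [c_\ell, c_h]$ left for the discussion of the numerical $c_\mathrm{mid}$ in the main text.
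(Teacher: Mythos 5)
Your proof is correct on the open regions $c\in(0,c_\ell)$ and $c\in(c_h,1)$, and it rests on the same underlying idea as the paper's: reduce both payoffs to Gaussian tail terms and decide the comparison by the slowest exponential decay rate, which produces exactly the paper's thresholds ($c_\ell$ from comparing project 1's spoiling exponent $\beta$ with full discovery's improvement exponent $\gamma$, and $c_h$ from comparing $\alpha$ with $\delta$). The execution differs in one genuine way: the paper never identifies the rate of the orthant probability $P_d=\mathbb{P}(v_1<0,v_2<0)$; it instead brackets it between the product bound $\Phi(-c\mu/\sigma_2)\,\Phi(\mu/\sigma_1)$ (positive dependence) and the marginal $\Phi(-c\mu/\sigma_2)$, and then compares payoff \emph{bounds} via endpoint-times-length estimates of integrals of $\phi$. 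Your conditioning argument pins down $P_d\sim\Phi(-c\mu/\sigma_2)$ exactly, so you can work with the exact four-term decomposition and a clean ``smallest exponent wins'' step; this is tidier, avoids Slepian-type inequalities and the integral bounding, and your Mills-ratio remark correctly disposes of the polynomial prefactors. One caveat you share with the paper: at the knife edge $c=0$ the strict inequality $\delta<\beta$ degenerates ($\delta=\beta=0$), and there in fact $P_d<\Phi_b=1/2$ while $\Phi_a<\Phi_c$, so full discovery beats discovering project 1 for every $\mu$; the first claim should therefore be read on $c\in(0,c_\ell)$, which is also all that the paper's own sufficient-condition chain delivers, so this is a defect of the statement rather than of your argument --- but it is worth flagging explicitly since your ``full rate ordering'' sentence silently assumes $c>0$.
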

	\begin{proof}
		Let $X$ be a $N(0, 1)$ random variable. Using the decomposition above, the payoff for discovering both projects is
		\begin{equation*}
			\begin{split}
				\Pi^b & = \frac{1}{2} \, \mathbb{P}\bigg(X < \frac{(1-c)\mu}{\sqrt{\sigma_1^2 + \sigma_2^2 +2 \rho \sigma_1 \sigma_2}}\bigg) - \frac{1}{2} \big(1 - \mathbb{P}(v_1 < 0 \text{ and } v_2 < 0)\big)
				\\ & \leq 1 - \frac{1}{2} \, \Phi\bigg(\frac{(1-c)\mu}{\sqrt{\sigma_1^2 + \sigma_2^2 +2 \rho \sigma_1 \sigma_2}}\bigg) - \frac{1}{2} \, \Phi\bigg(\frac{-c\mu}{\sigma_2}\bigg) \, \Phi\bigg(\frac{\mu}{\sigma_1}\bigg),
			\end{split}
		\end{equation*}
		The inequality comes from the fact that because the $\rho > 0$, the probability that  both $v_1$ and $v_2$ are less than some value is bounded above by the product of the probabilities.
		
		Because we cannot write the spoiling probability using a univariate normal cdf, we cannot follow the the approach of Lemma \ref{lm:smallVar}. Instead, we directly compare the expression $\Pi^1$ from Theorem \ref{thm:trialBalloon} to our upper bound on $\Pi^b$:
		\begin{equation*}
			\footnotesize
			\begin{split}
				1 - \frac{1}{2} \, \Phi\bigg(\frac{(1-c)\mu}{\sigma_1+\rho \sigma_2}\bigg)- \frac{1}{2} \, \Phi\bigg(\frac{-c\mu}{\rho \sigma_2}\bigg) > & \, 1 - \frac{1}{2} \, \Phi\bigg(\frac{(1-c)\mu}{\sqrt{\sigma_1^2 + \sigma_2^2 +2\rho \sigma_1 \sigma_2}}\bigg) 
				\\& - \frac{1}{2} \, \Phi\bigg(\frac{\mu}{\sigma_1}\bigg) \, \Phi\bigg(\frac{-c\mu}{\sigma_2}\bigg)
				\\ \iff \Phi\bigg(\frac{(1-c)\mu}{\sqrt{\sigma_1^2 + \sigma_2^2 +2\rho \sigma_1 \sigma_2}}\bigg) + \Phi\bigg(\frac{\mu}{\sigma_1}\bigg)\Phi\bigg(\frac{-c\mu}{\sigma_2}\bigg) > & \, \Phi\bigg(\frac{(1-c)\mu}{\sigma_1+\rho \sigma_2}\bigg) +  \Phi\bigg(\frac{-c\mu}{\rho \sigma_2}\bigg)
				\\ \iff \Phi\bigg(\frac{-c\mu}{\sigma_2}\bigg) - \Phi\bigg(\frac{-c\mu}{\rho \sigma_2}\bigg)  - \bigg(1-\Phi\bigg(\frac{\mu}{\sigma_1}\bigg)\bigg) \, \Phi\bigg(\frac{-c\mu}{\sigma_2}\bigg) > & \, \Phi\bigg(\frac{(1-c)\mu}{\sigma_1+\rho \sigma_2}\bigg) 
				\\ & - \Phi\bigg(\frac{(1-c)\mu}{\sqrt{\sigma_1^2 + \sigma_2^2 +2\rho \sigma_1 \sigma_2}}\bigg)
				\\ \iff \int_{-c\mu/(\rho \sigma_2)}^{-c\mu/\sigma_2} \, \phi(x) \, dx - \bigg(1-\Phi\bigg(\frac{\mu}{\sigma_1}\bigg)\bigg) \Phi\bigg(\frac{-c\mu}{\sigma_2}\bigg) > & \, \int_{(1 - c) \mu / \sqrt{\sigma_1^2 + \sigma_2^2 +2\rho \sigma_1 \sigma_2}}^{(1 - c) \mu / (\sigma_1 + \rho \sigma_2)} \, \phi(x) \, dx.
			\end{split}
		\end{equation*}
		The term $(1-\Phi(\mu/\sigma_1)) \, \Phi(-c\mu/\sigma_2)$ vanishes with order $(\mu \exp(\mu^2))^2$. However, the integrals vanish only with order $\mu \exp(\mu^2)$. Thus for $\mu$ sufficiently large we can directly compare the integrals. We also use the symmetry of the normal distribution to switch from the integral over $(-c\mu/(\rho \sigma_2), -c\mu/\sigma_2)$ to the integral over $(c\mu/\sigma_2, c\mu/(\rho \sigma_2))$:
		\begin{equation*}
			\tiny
			\begin{split}
				\int_{c\mu/\sigma_2}^{c\mu/(\rho \sigma_2)} \, \phi(x) \, dx > & \int_{(1 - c) \mu / \sqrt{\sigma_1^2 + \sigma_2^2 +2\rho \sigma_1 \sigma_2}}^{(1 - c) \mu / (\sigma_1 + \rho \sigma_2)} \, \phi(x) \, dx
				\\ \impliedby  \phi\bigg(\frac{c\mu}{\rho \sigma_2}\bigg) \, c \, \mu \, \frac{1-\rho}{\rho \sigma_2} > & \, \phi\bigg(\frac{(1-c)\mu}{\sqrt{\sigma_1^2 + \sigma_2^2 +2\rho \sigma_1 \sigma_2}}\bigg) \, (1-c) \, \mu \, \bigg(\frac{1}{\sigma_1+\rho \sigma_2}- \frac{1}{\sqrt{\sigma_1^2 + \sigma_2^2 +2\rho \sigma_1 \sigma_2}}\bigg)
				\\ \iff \phi\bigg(\frac{c\mu}{\rho \sigma_2}\bigg) \bigg/ > & \, \frac{1 - c}{c} \, \frac{\rho \sigma_2}{1 - \rho} \, \bigg(\frac{1}{\sigma_1+\rho \sigma_2}- \frac{1}{\sqrt{\sigma_1^2 + \sigma_2^2 +2\rho \sigma_1 \sigma_2}}\bigg) \phi\bigg(\frac{(1-c)\mu}{\sqrt{\sigma_1^2 + \sigma_2^2 +2\rho \sigma_1 \sigma_2}}\bigg) := k
				\\ \iff k < & \exp\bigg(-\bigg(\bigg(\frac{c\mu}{\rho \sigma_2}\bigg)^2 - \bigg(\frac{(1-c)\mu}{\sqrt{\sigma_1^2 + \sigma_2^2 +2\rho \sigma_1 \sigma_2}}\bigg)^2 \,\bigg)\bigg/2 \bigg)	
				\\ \impliedby \text{for large } \mu \qquad \bigg(\frac{c\mu}{\rho \sigma_2}\bigg)^2 < & \, \bigg(\frac{(1-c)\mu}{\sqrt{\sigma_1^2 + \sigma_2^2 +2\rho \sigma_1 \sigma_2}}\bigg)^2
				\\ \iff c < & \, \frac{\rho \sigma_2}{\rho \sigma_2 + \sqrt{\sigma_1^2 + \sigma_2^2 +2\rho \sigma_1 \sigma_2}} =: c_\ell.
			\end{split}
		\end{equation*}
		We can now re-introduce the term $(1-\Phi(\mu/\sigma_1)) \, \Phi(-c\mu/\sigma_2)$. Our limit result states that for $\mu$ large enough, the inequality in the first line holds. Beyond this point, the difference between the integrals vanishes with the same order as each integral on its own; that is, with order $\mu \exp(\mu^2)$. Thus the term $(1-\Phi(\mu/\sigma_1)) \, \Phi(-c\mu/\sigma_2)$ will be exponentially smaller than the difference between the integrals and does not alter the inequality.
		
		Now we perform the reverse bounding exercise to show that discovering both projects is better than discovering project 1 for high $c$. In place of $\Pi^b$ we use a lower bound on the utility from discovering both, obtained by subtracting $\mathbb{P}(v_2 < 0)$ rather than $\mathbb{P}(v_1 < 0 \text{ and } v_2 < 0)$. Using $\mathbb{P}(v_1 < 0)$ would result in a looser bound.
		\begin{equation*}
			\footnotesize
			\begin{split}
				1 - \frac{1}{2} \, \Phi\bigg(\frac{(1-c)\mu}{\sigma_1+\rho \sigma_2}\bigg)- \frac{1}{2} \, \Phi\bigg(\frac{-c\mu}{\rho \sigma_2}\bigg) < & \, 1 - \frac{1}{2} \, \Phi\bigg(\frac{(1-c)\mu}{\sqrt{\sigma_1^2 + \sigma_2^2 +2\rho \sigma_1 \sigma_2}}\bigg) - \frac{1}{2} \, \Phi\bigg(\frac{-c\mu}{\sigma_2}\bigg)
				\\ \iff \Phi\bigg(\frac{(1-c)\mu}{\sqrt{\sigma_1^2 + \sigma_2^2 +2\rho \sigma_1 \sigma_2}}\bigg) + \Phi\bigg(\frac{-c\mu}{\sigma_2}\bigg) < & \, \Phi\bigg(\frac{(1-c)\mu}{\sigma_1+\rho \sigma_2}\bigg) +  \Phi\bigg(\frac{-c\mu}{\rho \sigma_2}\bigg)
				\\ \iff \Phi\bigg(\frac{-c\mu}{\sigma_2}\bigg) - \Phi\bigg(\frac{-c\mu}{\rho \sigma_2}\bigg) < & \, \Phi\bigg(\frac{(1-c)\mu}{\sigma_1+\rho \sigma_2}\bigg) - \Phi\bigg(\frac{(1-c)\mu}{\sqrt{\sigma_1^2 + \sigma_2^2 +2\rho \sigma_1 \sigma_2}}\bigg)
				\\ \iff \int_{-c\mu/(\rho \sigma_2)}^{-c\mu/\sigma_2} \, \phi(x) \, dx < & \, \int_{(1 - c) \mu / \sqrt{\sigma_1^2 + \sigma_2^2 +2\rho \sigma_1 \sigma_2}}^{(1 - c) \mu / (\sigma_1 + \rho \sigma_2)} \, \phi(x) \, dx
				\\ \iff \int_{c\mu/\sigma_2}^{c\mu/(\rho \sigma_2)} \, \phi(x) \, dx < & \, \int_{(1 - c) \mu / \sqrt{\sigma_1^2 + \sigma_2^2 +2\rho \sigma_1 \sigma_2}}^{(1 - c) \mu / (\sigma_1 + \rho \sigma_2)} \, \phi(x) \, dx
				\\ \impliedby \frac{1}{k} < & \, \phi\bigg(\frac{(1 - c) \mu}{\sigma_1 + \rho \sigma_2}\bigg) \bigg/ \phi\bigg(\frac{c \mu}{\sigma_2}\bigg)
				\\ \iff \frac{1}{k} < & \, \exp\bigg(-\bigg( \bigg(\frac{(1-c)\mu}{\sigma_1 + \rho \sigma_2}\bigg)^2 - \bigg(\frac{c\mu}{\sigma_2}\bigg)^2 \bigg) \bigg/ 2\bigg)
				\\ \impliedby \text{for large } \mu \qquad \bigg(\frac{(1-c)\mu}{\sigma_1 + \rho \sigma_2}\bigg)^2 < & \, \bigg(\frac{c\mu}{\sigma_2}\bigg)^2
				\\ \iff c_h := \frac{\sigma_2}{\sigma_1 + (1 + \rho) \, \sigma_2} < & \, c.
			\end{split}
		\end{equation*}
		Note the similarity of this expression to $c^*$ and $c^{**}$ in Theorem \ref{thm:optSingle}. The difference comes from comparing the improvement probability associated with discovering project 1 to the spoiling probability associated with discovering project 2. Thus this bound lies above $c^{*}$ (because spoiling is stronger than when discovering project 1) but below $c^{**}$ (because improvement is weaker than when discovering project 2). 
	\end{proof}
	
	The second lemma compares no discovery to discovering both projects:
	\begin{lemma}
		\label{lm:bothVsNo}
		Let $\mu_1 = -\mu < 0 \leq c \mu = \mu_2$ for $c \in [0, 1)$. There is a cutoff $c_\text{no}$ such that
		\begin{itemize}
			\item For any $c \in [0, c_\text{no})$, there exists $\bar{\mu}(c)$ such that full discovery is better than no discovery if and only if $\mu \in [0, \bar{\mu}(c))$. 
			\item For any $c > c_\text{no}$, full discovery is better than no discovery for all $\mu$.
		\end{itemize}
	\end{lemma}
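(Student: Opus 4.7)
The plan is to reduce the claim to a one-dimensional sign inequality in $\mu$, identify the threshold $c_\text{no}$ from tail behavior, and establish single-crossing of the difference when $c < c_\text{no}$. I would begin by writing out both payoffs. Because $w_1 = w_2 = 1/2$ and $\mu_1 + \mu_2 < 0 < \mu_2$, no discovery yields exactly $1/2$ (only project 2 is approved on its own). For full discovery the decomposition invoked in the proof of Lemma \ref{lm:bothVs1} gives
\begin{equation*}
\Pi^b = \tfrac{1}{2}\,\mathbb{P}(v_1 + v_2 \geq 0) + \tfrac{1}{2}\bigl(1 - \mathbb{P}(v_1 < 0,\, v_2 < 0)\bigr),
\end{equation*}
so after setting $\sigma_+ := \sqrt{\sigma_1^2 + \sigma_2^2 + 2\rho\sigma_1\sigma_2}$ the comparison $\Pi^b \gtrless 1/2$ reduces to
\begin{equation*}
\Phi\!\left(\frac{-(1-c)\mu}{\sigma_+}\right) \;\gtrless\; \Phi_2\!\left(\frac{\mu}{\sigma_1},\, \frac{-c\mu}{\sigma_2};\, \rho\right),
\end{equation*}
where $\Phi_2(\cdot,\cdot;\rho)$ denotes the standard bivariate normal cdf.

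Next I would analyze the two endpoints in $\mu$. At $\mu = 0$ the left-hand side equals $1/2$ and the right-hand side equals $\Phi_2(0,0;\rho) = \tfrac{1}{4} + \tfrac{\arcsin(\rho)}{2\pi} < 1/2$, so full discovery strictly beats no discovery. As $\mu \to \infty$ both sides vanish, with leading exponential rates $\exp(-(1-c)^2\mu^2/(2\sigma_+^2))$ on the left and $\exp(-c^2\mu^2/(2\sigma_2^2))$ on the right: in the bivariate tail expansion the marginal $\mu/\sigma_1 \to +\infty$ saturates, so the rate is set by the binding marginal $-c\mu/\sigma_2 \to -\infty$. Matching these exponents identifies
\begin{equation*}
c_\text{no} := \frac{\sigma_2}{\sigma_+ + \sigma_2},
\end{equation*}
with the left-hand side dominating asymptotically iff $c > c_\text{no}$. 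Combining this with positivity at $\mu = 0$ immediately yields the second bullet: for $c > c_\text{no}$, full discovery beats no discovery at both ends and, by continuity combined with the strict exponential separation, does so throughout $\mu \geq 0$. For $c < c_\text{no}$, the sign of the difference flips between $\mu = 0$ and the limit, so at least one crossing $\bar\mu(c)$ exists by the intermediate value theorem, and continuity of $\bar\mu(\cdot)$ follows from the Implicit Function Theorem applied to $g(\mu, c) := \text{LHS} - \text{RHS}$ at that crossing.

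The main obstacle is showing the crossing is unique, so that $g$ is negative precisely on $(\bar\mu(c), \infty)$. The first term of $g$ is strictly decreasing in $\mu$, but differentiating the bivariate cdf yields competing $(+)$ and $(-)$ contributions from the two marginal-derivative pieces, so the second term is generically non-monotone. I expect to close this gap by a log-derivative comparison in the tail: once $\mu$ is large enough that both sides lie in their exponential-tail regime, the rate ordering $c^2/\sigma_2^2 > (1-c)^2/\sigma_+^2$ (strict for $c < c_\text{no}$) forces $g$ to be negative and increasing in magnitude beyond an explicit threshold $\mu_\text{tail}(c)$. Ruling out additional crossings in the pre-tail regime would then be attacked via Mills-ratio bounds on $g'(\mu)$, with numerical verification of the type used elsewhere in the paper serving as a backstop if that analytical step proves unwieldy.
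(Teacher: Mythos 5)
Your reduction is correct and takes a genuinely different route from the paper: writing $\Pi^b-\tfrac12$ as $\tfrac12\big[\Phi\big(\tfrac{-(1-c)\mu}{\sigma_+}\big)-\Phi_2\big(\tfrac{\mu}{\sigma_1},\tfrac{-c\mu}{\sigma_2};\rho\big)\big]$, checking the $\mu\to 0^+$ endpoint, and matching tail exponents is an analytic argument, whereas the paper argues geometrically with the elliptical contour sets of the bivariate normal, comparing the region above $v_1+v_2=0$ to the third quadrant on each contour and tracking how these shares move as the mean shifts; notably the paper never produces a closed form for $c_\text{no}$, so your rate-matching candidate $\sigma_2/(\sigma_2+\sigma_+)$ is a genuine refinement \emph{if} the rest of your plan goes through.

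It does not yet go through, and the gap sits exactly where the lemma's content is. For $c>c_\text{no}$ the claim is dominance for \emph{all} $\mu$, and ``positive at $\mu\to0^+$, positive asymptotically, hence positive throughout by continuity and exponential separation'' is not a valid inference: continuity plus endpoint signs says nothing about intermediate $\mu$, and the exponential comparison is purely asymptotic. Likewise, for $c<c_\text{no}$ the intermediate value theorem gives existence of a crossing but the lemma's ``if and only if $\mu\in[0,\bar\mu(c))$'' requires uniqueness, and you correctly note that $\Phi_2(\mu/\sigma_1,-c\mu/\sigma_2;\rho)$ is generically non-monotone in $\mu$ (the two arguments push in opposite directions), so the Mills-ratio/numerical step you defer is precisely the missing proof, not a technical cleanup. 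There is a further consequence you should flag: if the difference dipped negative at intermediate $\mu$ for some $c$ slightly above the rate-matching threshold, the true $c_\text{no}$ of the lemma would lie strictly above your closed form, so even the identification of $c_\text{no}$ is contingent on the unproven no-dip property. The paper closes this step with its contour-set monotonicity argument (the full-discovery payoff declining in $\mu$ at fixed $c$ below the threshold curve, and region (1) dominating region (2) above it); to make your route complete you need an analytic counterpart, e.g.\ a single-crossing or monotonicity result for $\mu\mapsto\Phi\big(\tfrac{-(1-c)\mu}{\sigma_+}\big)-\Phi_2\big(\tfrac{\mu}{\sigma_1},\tfrac{-c\mu}{\sigma_2};\rho\big)$, rather than bounds that only control the tail regime.
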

	\begin{proof}
		Studying the contour sets of the normal distribution shows that for any $c < 1/2$, discovering neither is eventually better than discovering both, and once that occurs, that dominance is monotonic. For $\rho \in (0, 1)$, the cutoff obtained for discovering project 1 to dominate discovering both projects is below $1/2$ so this argument always applies.
		
		Formally, we can plot the contour sets of the standard normal in $\mu_1$-$\mu_2$ space as ellipses slanted along the line $y = x$ and centered at the origin. Discovering both projects can be seen as choosing a contour set, then drawing uniformly from the contour set. On each contour set, we can draw three regions:
		\begin{enumerate}
			\item Above the line $y = -x$, where the sum of project values is positive.
			\item In the third quadrant, where both project values are negative.
			\item The remaining region (two symmetric triangles in the first and fourth quadrants) where exactly one project value is positive.
		\end{enumerate}
		For the standard normal, we can see that this decomposition yields a positive expected value for each contour set.
		
		The contour sets of a normal with different means can be derived by shifting the center appropriately. As we shift the center to the left along the $x$-axis, meaning that project 1's mean becomes more negative, regions (2) and (3) expand to cover region (1) without increasing the relative share of region (3) compared to region (2). Thus expected payoff decreases. If we instead shift the center up along the $y$-axis, meaning that project 2's mean becomes more positive, regions (1) and (3) expand to cover region (2) without increasing the relative share of region (3) compared to region (1). Thus expected payoff increases. Therefore, as we move along some curve $c_\text{no} = c(\mu)$ in the second quadrant, regions (1) and (2) become and remain equal in size for all ellipses not containing the origin. For such contour sets, the expected payoff is $1/2$; however, because the distribution has full support, some contour sets will cover the origin. The payoff in those contour sets is strictly larger, because in each of them region (1) is larger than region (2). Thus the payoff from full discovery for any project with mean lying on the curve $c_\text{no}$ exceeds $1/2$, the payoff from no discovery. For $c > c_\text{no}$, the payoff from revealing both increases, so full discovery continues to dominate. For $c < c_\text{no}$, the region (2) is larger than the region (1) for ellipses not containing the origin. For $\mu$ large, mass is concentrated in these ellipses; because the normal distribution has vanishing tails, we can eventually ignore the mass from ellipses containing the origin (i.e., for any $\varepsilon > 0$ there is $\mu$ large enough that the gains from full discovery captured by the origin-containing contour sets are below $\varepsilon$). By taking $\mu$ large enough that those $\varepsilon$-gains are bounded below the losses from the contour sets not containing the origin, we ensure that full discovery is worse than no discovery. Therefore for any $c < c_\text{no}$, there is $\mu$ large enough that no discovery is better than no discovery. The payoffs from full discovery are monotonically declining in $\mu$, so in fact for $c < c_\text{no}$ there is a cutoff $\bar{mu}(c)$ such that full discovery is preferred if and only if $\mu \in [0, \bar{\mu})$.
	\end{proof}
	
	We then combine the two lemmas to state the theorem. Note that we do not bound $c_\text{no}$ in relation to $c_\ell$ and $c_h$, or indeed provide a closed-form solution, leading to the presentation in the text.
\end{proof}

\begin{proof}[Proof of Proposition \ref{prop:noise}] $\text{} $
	
	We begin by computing the distribution of the agent's posterior mean belief about $v_i$, which we call $q_i$, after observing the signal $s_i$. Because $s_i$ is normally distributed, as is the prior over $v_i$, we can use standard properties of the normal distribution to get
	$$q_i \sim N\bigg(\mu_i, \frac{\sigma_i^4}{\tau_i^2 + \sigma_i^2}\bigg).$$
	Intuitively, as the signal $s_i$ becomes noisier, the agent pays less attention to it, so beliefs do not stray far from the prior mean $\mu_i$. In the limiting case $\tau_i^2 = 0$, the distribution is the same as if the agent observed public discovery of $v_i$ (as in Section \ref{sec:bvNorm}). In the other limiting case $\tau_i^2 \rightarrow \infty$, the signal is completely uninformative and the agent always places full weight on the prior mean $\mu_i$.
	
	Although this distribution describes the possible posterior mean beliefs after observing $s_i$, note that the agent's actual posterior about $v_i$ will have distribution
	$$v_i \sim N(q_i, \sigma_i^2 + \tau_i^2).$$
	Thus we get the distribution
	$$v_j | s_i \sim N\bigg(\mu_j + \rho \frac{\sigma_j}{\sqrt{\sigma_i^2 + \tau_i^2}} \, (s_i - \mu_i), (1 - \rho)^2 \sigma_j^2\bigg)$$
	for the agent's posterior belief about $v_j$.
	
	Thus we can replicate all the results of Section \ref{sec:bvNorm} using the distribution 
	$$N\bigg(\mu_i, \frac{\sigma_i^4}{\tau_i^2 + \sigma_i^2}\bigg)$$
	for the project $i$ which the principal chooses to discover and 
	$$N\bigg(\mu_j + \rho \frac{\sigma_j}{\sqrt{\sigma_i^2 + \tau_i^2}} \, (s_i - \mu_i), (1 - \rho)^2 \sigma_j^2\bigg)$$
	for the posterior belief about the project $j$ conditional on discovery of project $i$. A noisy signal weakens both the spoiling and improvement effects for project $i$ by reducing the amount by which discovery shifts the agent's beliefs.
\end{proof}

\begin{proof}[Proof of Proposition \ref{prop:Nballoon}] $\text{} $
	
	We begin by characterizing the principal's utility in a general $N$-project setting, with no restrictions on project means or variances. Extending the computation in the proof of Proposition \ref{prop:bundle}, we see that the probability of obtaining a bundle containing projects $j \in S$ for some subset of projects $S$, upon revealing project $i$, is
	\begin{equation*}
		\begin{split}
			\mathbb{P}\bigg(v_i + \sum_{j \in S} \mathbb{E}[v_j | v_i] \geq 0\bigg) & = \mathbb{P}\bigg(v_i \geq \frac{\rho \mu_i \sum_{j \in S} (\sigma_j/\sigma_i) - \sum_{j \in S} \mu_j}{1 + \rho \sum_{j \in S} (\sigma_j/\sigma_i)}\bigg)
			\\& = \Phi\bigg(\frac{\mu_i + \sum_{j \in S} \mu_j}{\sigma_i + \rho \sum_{j \in S} \sigma_j}\bigg)
		\end{split}
	\end{equation*}
	Similarly, extending the decomposition of utility from the proof of Theorem \ref{thm:trialBalloon}, we get that the principal's utility is given by
	$$\sum_{i \in N} w_i \, \mathbb{P}(\text{some bundle containing } i).$$
	Define $S_j^i$ as
	$$S_j^i := \bigg(\arg \max_{S \subseteq N} \mathbb{P}(\text{bundle containing } j \text{ approved} \mid i \text{ discovered})\bigg) \setminus \left\{i, j\right\},$$
	that is, the most likely bundle containing $j$ conditional on project $i$ being discovered, excluding project $i$ and $j$ themselves. To further parallel Theorem \ref{thm:trialBalloon}, let $\mu_i = -\mu < 0$ and let $\mu_j = c_j \mu$ for constants $c_j \in \mathbb{R}$. Then we can write the principal's utility from discovering project $i$ as
	\footnotesize
	$$w_i \, \Phi\bigg(\frac{-(1 - \sum_{j \in S_i^i} c_j) \, \mu}{\sigma_i + \rho \sum_{j \in S_j^i} \sigma_j}\bigg) + \sum_{j \text{ s.t. } c_j < 0} w_j \, \Phi\bigg(\frac{-(1 - c_j - \sum_{j \in S_j^i} c_j) \, \mu}{\sigma_i + \rho \sigma_j + \rho \sum_{j \in S_j^i} \sigma_j}\bigg) - \sum_{j \text { s.t. } c_j \geq 0} w_j \, \Phi\bigg(\frac{- c_j \mu}{\rho \sigma_j}\bigg).$$\normalsize
	An approach in the style of Corollary \ref{cor:comparison} would allow us to compute the utility from discovering a project $j$ with $c_j \geq 0$ as well.
	
	From this expression, we can see that the approach to comparing discovery of project $i$ with no discovery is largely preserved. Fixing all parameters except some $c_j$, the first two terms are weakly increasing in $c_j$ whereas the third is weakly decreasing in $c_j$. Thus there is once again a well-defined continuous cutoff function $\underline{c}(\mu) \in \mathbb{R}$ such that the sum of the first two terms exceeds the third if and only if $c_j > \underline{c}_j(\mu)$. Unfortunately, we cannot analytically sign $\underline{c}_j'(\mu)$ to determine whether the range of viable $c_j$ increases or decreases with higher precision; we expect that the result will be highly dependent on the remaining parameters of the model.
	
	Now we impose the assumptions stated in the text to simplify the expression above:
	$$w_i \, \Phi\bigg(\frac{-(1 - \sum_{j \in S_i^i} c_j) \, \mu}{\sigma_i + \rho \sum_{j \in S_i^i} \sigma_j}\bigg) + \sum_{j \neq i} w_j \, \Phi\bigg(\frac{-c_j \mu}{\rho \sigma_j}\bigg).$$
	If another project $k$ is discovered, we have utility
	$$w_i \, \Phi\bigg(\frac{-(1 - c_k - \sum_{j \in S_i^k} c_j) \, \mu}{\rho \sigma_i + \sigma_k + \rho \sum_{j \in S_i^k} \sigma_j}\bigg) + \Phi\bigg(\frac{-c_k \mu}{\sigma_k}\bigg) + \sum_{j \notin \left\{i, k\right\}} w_j \, \Phi\bigg(\frac{-c_j \mu}{\rho \sigma_j}\bigg).$$
	Discovering project $i$ is preferred whenever
	\footnotesize
	$$w_i \, \Phi\bigg(\frac{-(1 - \sum_{j \in S_i^i} c_j) \, \mu}{\sigma_i + \rho \sum_{j \in S_i^i} \sigma_j}\bigg) +  w_k \, \Phi\bigg(\frac{-c_k \mu}{\rho \sigma_j}\bigg) > w_i \, \Phi\bigg(\frac{-(1 - c_k - \sum_{j \in S_i^k} c_j) \, \mu}{\rho \sigma_i + \sigma_k + \rho \sum_{j \in S_i^k} \sigma_j}\bigg) + \Phi\bigg(\frac{-c_k \mu}{\sigma_k}\bigg).$$\normalsize
	Clearly the second term on the left-hand side is larger than the second term on the right-hand side, as in the proof of Theorem \ref{thm:trialBalloon}. If project $i$ also satisfies $\sigma_i = \max_{j \in N} \sigma_j$, then we can show that the first term on the left-hand side is greater than the first term on the right-hand side, so discovering project $i$ is always better than discovering any other project:
	$$\frac{-(1 - \sum_{j \in S_i^i} c_j) \, \mu}{\sigma_i + \rho \sum_{j \in S_i^i} \sigma_j} \geq \frac{-(1 - c_k - \sum_{j \in S_i^k} c_j) \, \mu}{\sigma_i + \rho \sigma_k + \rho \sum_{j \in S_i^k} \sigma_j} \geq \frac{-(1 - c_k - \sum_{j \in S_i^k} c_j) \, \mu}{\rho \sigma_i + \sigma_k + \rho \sum_{j \in S_i^k} \sigma_j}.$$
	The first inequality follows because $S_i^i$ is by construction a better choice of bundle when discovering project $i$ than $S_i^k \cup \left\{k\right\}$. The second follows from $\sigma_i \geq \sigma_k$.
	Thus discovering a trial balloon remains optimal in this setting.
	
	Finally, we examine the content of Theorem \ref{thm:optSingle} in this setting. If project $i$ remains the only negative-mean project but $\sigma_i < \max_{j \in N} \sigma_j$, then we cannot in general sign the value
	$$\delta = \bigg(\frac{-(1 - \sum_{j \in S_i^i} c_j) \, \mu}{\sigma_i + \rho \sum_{j \in S_i^i} \sigma_j}\bigg) - \bigg(\frac{-(1 - c_k - \sum_{j \in S_i^k} c_j) \, \mu}{\rho \sigma_i + \sigma_k + \rho \sum_{j \in S_i^k} \sigma_j}\bigg).$$
	If $\delta > 0$ for all parameter values despite the assumption on $\sigma_i$, then Proposition \ref{prop:Nballoon} still holds. Otherwise, in the cases where $\delta < 0$ we have the same trade-off as in Theorem \ref{thm:optSingle}, where project $i$ has a lower spoiling chance than any project $k$, but project $k$ has a greater improvement chance. We can find limit cutoffs in the style of $c^*$ and $c^{**}$ in that theorem for each $c_j$ with $j \in S_i^i \cup S_i^k \cup \left\{k\right\}$, and compare them if $c_j \in (S_i^i \cup \left\{k\right\}) \cap (S_i^k \cup \left\{k\right\})$ to obtain a version of Theorem \ref{thm:optSingle} in this more general setting. However, it is harder to visualize these results in the style of Figure \ref{fig:optSingle} because the space $\left\{c_j\right\} \times \mu$ is now $N$-dimensional.
	
	If project $i$ is no longer the only negative-mean project, we now have to deal with the potential for $S_j^i \neq S_j^k$ for all negative-mean projects $j$. That is, we now have an expression $\delta_j$ in the style of $\delta$ above comparing improvement effects across discovery rules for each negative-mean project. In general, we cannot sign these expressions without explicitly fixing parameters, and cannot even make a comparison of limit cutoffs using the techniques of Theorem \ref{thm:optSingle} because there are multiple improvement effects to consider. 
\end{proof}

\end{document}